\def\BibTeX{{\rm B\kern-.05em{\sc i\kern-.025em b}\kern-.08em
		T\kern-.1667em\lower.7ex\hbox{E}\kern-.125emX}}
\newcommand{\Ex}{\mathbb{E}}
\newcommand{\pr}{\mathbb{P}}
\newcommand{\rt}{\right}
\newcommand{\lt}{\left}
\newcommand{\E}{\mathcal{E}}
\newcommand{\PP}{\mathcal{P}}
\newcommand{\indi}{\mathbbm{1}}
\newtheorem{theorem}{Theorem}
\newtheorem{corollary}{Corollary}
\begin{document}

\title{Controlling Epidemic Spread Under Immunization Delay Constraints\vspace{-0.1in}\thanks{This work was supported in part by the National Science Foundation under Grant Nos. 1910749, 2007423, 2209921, and 2209922. Part of this work was done while Shiju Li was with Texas State University.}}

\author[$\ast$]{Shiju Li}
\author[$\dagger$]{Xin Huang}
\author[$\dagger$]{Chul-Ho Lee}
\author[$\ddagger$]{Do Young Eun\vspace{-0.1in}}

\affil[$\ast$]{Florida Institute of Technology, \textsuperscript{$\dagger$}Texas State University, \textsuperscript{$\ddagger$}North Carolina State University\vspace{-0.1in}}

\maketitle

\begin{abstract}
In this paper, we study the problem of minimizing the spread of a viral epidemic when immunization takes a non-negligible amount of time to take into effect. Specifically, our problem is to determine which set of nodes to be vaccinated when vaccines take a random amount of time in order to maximize the total reward, which is the expected number of saved nodes. We first provide a mathematical analysis for the reward function of vaccinating an arbitrary number of nodes when there is a single source of infection. While it is infeasible to obtain the optimal solution analytically due to the combinatorial nature of the problem, we establish that the problem is a monotone submodular maximization problem and develop a greedy algorithm that achieves a $(1\!-\!1/e)$-approximation. We further extend the scenario to the ones with multiple infection sources and discuss how the greedy algorithm can be applied systematically for the multiple-source scenarios. We finally present extensive simulation results to demonstrate the superiority of our greedy algorithm over other baseline vaccination strategies.
\end{abstract}

\section{Introduction}

Epidemic models have been important mathematical tools in analyzing the spread and control of epidemics with various applications, ranging from infectious diseases to malware propagation. Their importance has been evident in the COVID-19 crisis. The spread of infectious diseases has been popularly modeled via the compartment and metapopulation models~\cite{Daley99a,Newman10}, where each individual has equal chance to contact others in the entire population or each subpopulation. On the other hand, epidemic models on networks have been more actively used to model malware propagation over networks~\cite{Mieghem-ToN09,Newman10,Nowzari-CS16,lee2019transient,li2020trapping}. The Internet-connected devices are always exposed and vulnerable to malware and worm attacks via not only their underlying networks but also their users' social networks.

\vspace{2pt}
\noindent \textbf{Prior Work and Motivation:} Most studies on epidemic models on networks have been concerned about the \emph{persistence} and \emph{extinction} of the epidemics in their steady state. In other words, their central question is under what conditions an epidemic dies out quickly or lasts for a long period of time, or when the desired steady state of extinction can be achieved. Earlier studies established the \emph{epidemic threshold}, below which the epidemic dies out eventually over time, for susceptible-infected-susceptible/removed (SIS/SIR) models and similar variants~\cite{Chakrabarti08,Ganesh05,Draief-AAP08,Prakash-ICDM11}. It has then become the fundamental basis for the development of countermeasures or vaccination policies by manipulating the underlying network structure, e.g., removing $k$ nodes or edges~\cite{Tong-TKDE16,Tong-TKDD16,Vullikanti-SDM15}, or controlling epidemic parameters~\cite{Nowzari-CS16,Preciado-TCNS14} to achieve the `below-the-threshold' condition for the extinction of the epidemic.

While such a steady-state analysis has been crucial in epidemic modeling and control, it is also important to understand the transient dynamics of epidemic spreading over a network. In particular, when it comes to malware propagation, there is a non-negligible amount of time for a patch or vaccine to become available after the outbreak of an epidemic (virus spread), during which only infections take place over the network. It often remains unknown for a number of days that such an attack has occurred, as is the case with `zero-day' attacks~\cite{Bilge12}. This observation has led to a recent study on characterizing the transient dynamics of the susceptible-infected (SI) model on a network~\cite{lee2019transient}.

Despite the abundant literature on epidemic modeling and control, there is  still an important yet overlooked component in the development of vaccination policies. It is the presence of another non-negligible amount of time for  vaccines to come into effect, \emph{even after} the vaccines are available. While nodes are vaccinated, they are not immediately immune, but they are fully vaccinated after some amount of time. That is, the nodes who are vaccinated yet in such an immunization process are still vulnerable to infection. For example, the software patching process often undergoes multiple rounds of software installations with a possible failure in each round, which lead to a non-negligible delay in the patching/vaccination process~\cite{Nesara22CSCW,Nesara22,Wang17}. In 2019, Google's Project Zero~\cite{Google21} publishes their tracking records for publicly known cases of detected zero-day exploits, and the records indicate that it takes 15 days on average for vendors to patch a vulnerability that is being used in active attacks. A recent study reveals that the average time for companies to patch a vulnerability, or a CVE (Common Vulnerability and Exposures), is 215 days~\cite{Hacker23}. In addition, the two-dose vaccines against COVID-19 take two weeks to fully kick in after the second dose~\cite{cdc21}.

\vspace{1pt}
\noindent \textbf{Our Contributions:} In this paper, we study the problem of controlling the spread of a viral epidemic on a network when vaccines take a non-negligible amount of time to take into effect. Specifically, the problem is to determine which set of nodes in a network to be vaccinated to maximize the total reward, which is the expected number of saved nodes, when vaccines take a (common) random amount of `\emph{immunization time}' to take into effect. The epidemic spreading process here is governed by the SI model, where there is no curing process for infected nodes, while the healthy nodes who are vaccinated and remain healthy for the immunization time eventually become immune and non-infectious to others. In other words, the vaccinated nodes can still be infected during the immunization time.

We focus on the optimization problem under arbitrary tree networks for rigorous mathematical analysis. It has a higher reward to `save' a node with more children and descendants, which prevents infection from getting through, but such a node can also be at higher risk of infection as it tends to be closer to the source(s) of infection. In other words, the seemingly important nodes if the vaccination takes effect immediately may not be so important as it takes some non-negligible amount of time for vaccines to be effective, in which case they are more likely to end up getting infected before they are fully immune. The characterization of the reward and the risk of infection under a limited vaccine budget makes the analysis non-trivial, as shall be shown in Sections~\ref{se:results} and \ref{se:results2}. In addition, it is worth noting that tree networks have been similarly used in the literature for a relevant yet different problem, which is to localize the source of a viral epidemic, initially under the SI model with a single source~\cite{Shah2011rumor,shah2012rumor} and then extended to other models and scenarios~\cite{luo2013identifying,ying2016rumor,choi2019}.

The main contributions of this work are as follows:
\vspace{-2pt}
\begin{itemize}[itemsep=2pt,leftmargin=1.2em]
\item First, we formulate the problem of minimizing the spread of a viral epidemic, i.e., maximizing the number of saved nodes (total reward), under the immunization delay and limited vaccination budget constraints.

\item Second, we provide a mathematical analysis for the reward function of vaccinating a given number of nodes for the case of a single infection source. We then prove that the optimization problem is a monotone submodular maximization problem and develop a greedy algorithm that achieves a $(1\!-\!1/e)$-approximation.

\item Third, we discuss how this greedy algorithm can be applied to the scenarios with multiple infection sources.

\item Finally, we present extensive simulation results to demonstrate the superiority of our greedy algorithm over baseline vaccination strategies under a wide range of scenarios.
\end{itemize}

\section{Notation and Problem Formulation}\label{se:prelim}

Consider a connected tree graph $G \!=\! (V, E)$ with $|V| \!=\! n$. Let $s \in V$ be the source of a viral epidemic that starts at time 0, which is assumed to be the root node of the tree graph $G$. The epidemic spread is governed by the SI model in that an infected node can infect each of its neighbors (children) with infection rate $\lambda$. In other words, an infection from node $i$ to its child $j$ takes place after an exponential amount of time with mean $1/\lambda$. Let $d_i$ be the depth of node $i$, and let $Z_i$ be the time until node $i$ is infected. Then, we see that
$Z_j \!=\! \sum^{d_i}_{j = 1} X_j$, where $X_i$ is an \textit{i.i.d.} copy of the exponential random variable with rate $\lambda$. Also, let $A_i$ be the set of ancestors of node $i$, including its parent, and let $N_i$ be the set of descendants of node $i$ with $|N_i| \!=\! n_i$.

Let $k$ be the number of available vaccines, or the vaccination budget. $k$ vaccinated nodes take a random amount of immunization time, denoted by $\tau$, to be fully immune and no longer infectious to others, given that they remain healthy during the immunization time. In other words, vaccinated nodes are still vulnerable to infection and can be infected during the immunization time. Then, we see that the probability that node $i$, when vaccinated, eventually becomes immune is $\pr\{Z_i\!>\!\tau\}$, which can be obtained by conditioning on $Z_i$ (or $\tau$). That is, we have
\begin{equation*}
	\pr\{Z_i>\tau\} = \Ex\lt[\pr\{Z_i>\tau|Z_i\}\rt].
\end{equation*}
For example, if $\tau$ is exponentially distributed with rate $\mu$, we have
\begin{equation*}
	\pr\{Z_i > \tau\} = 1 - \Ex\lt[e^{-\mu Z_i}\rt]= 1- \lt(\frac{\lambda}{\lambda+\mu}\rt)^{d_i},
\end{equation*}
where the expectation is with respect to $Z_i$, and the second equality is from the Laplace transform of $Z_i \!=\! \sum^{d_i}_{j = 1} X_j$. We also define $I_i$ to be an indicator variable that node $i$ becomes immune when it is vaccinated, which is given by
\begin{equation*}
  I_i := \mathds{1}\{Z_i > \tau\},
\end{equation*}
with $\Ex[I_i] \!=\! \pr\{Z_i\!>\!\tau\}$. We collect all the notations in Table~\ref{table1}. See Figure~\ref{fig:prelim} for an illustration, where $d_i \!=\! 2$, $n_i \!=\! 4$, and $Z_i \!=\! X_1 \!+\! X_2$, which follows the so-called Erlang-2 distribution.

\begin{table}[t]
	\renewcommand{\arraystretch}{1.1}
	\caption{Notations} \label{table1}
	\vspace{-4mm}
	\begin{center}
		\centering
		\small
		\begin{tabular}{|c|c|c|c|}
			\hline
			$\lambda$ & Infection rate \\
			\hline
			$\tau$ & Immunization time for vaccines to take into effect \\
			\hline
			$k$ & Vaccination budget \\
			\hline
			$d_i$ & Depth of node $i$ \\
			\hline
			$A_i$ & Set of ancestors of node $i$  \\
			\hline
			$N_i$ & Set of descendants of node $i$  \\
			\hline
			$n_i$ & Number of descendants of node $i$, i.e., $n_i=|N_i|$   \\
			\hline
			$X_j$ & \textit{i.i.d} copy of the exponential random variable with $\lambda$ \\
			\hline
			$Z_i$ & Time until node $i$ is infected, i.e., $Z_i=\sum_{j=1}^{d_i}X_j$ \\
			\hline
			$I_i$ & Indicator variable that node $i$ is immune after vaccination  \\
			\hline
		\end{tabular}
	\end{center}
	\vspace{-2mm}
\end{table}

\begin{figure}[t]
	\centering
	\vspace{-2mm}
	\includegraphics[width=0.5\linewidth, trim=0.3mm 0.3mm 0.3mm 0mm, clip]{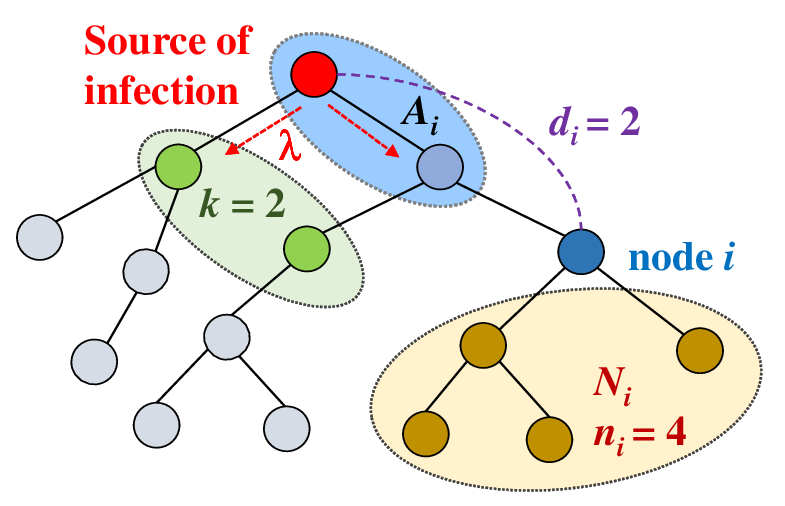}
	\vspace{-2mm}
	\caption{An example to illustrate the notations.}
	\label{fig:prelim}
	\vspace{-5mm}
\end{figure}

For a given budget $k$, let $r(S)$ denote the expected total reward, i.e., the expected number of saved nodes, by vaccinating a node set $S$, where $|S| \!=\! k$. Our problem is then to find which $k$ nodes in $G$ to be vaccinated to maximize the expected total reward under the immunization delay constraint, which is formally given as the following optimization problem:
\begin{align*}\label{Problem}
	\bm{\mathcal{P}:} &&  S^\star = \underset{S \subseteq V ~:~ |S|= k}{\arg\max} r(S). &&
\end{align*}

The first step to solving this problem is to characterize $r(S)$ in a closed form. Consider $k \!=\! 1$. Let $R_i$ be the reward of vaccinating node $i$, which is the total number of saved nodes when node $i$ is vaccinated and immune after the immunization time $\tau$. Then, we see that $R_i \!=\! n_i I_i$, and
\begin{equation}\label{k1}
  \Ex[R_i] = n_i \Ex[I_i] = n_i \pr\{Z_i > \tau\}.
\end{equation}
In other words, if node $i$ is immune after $\tau$, it is no longer infectious to its children, thus preventing the epidemic from getting through its descendants (including the children). Here the number of its descendants is $n_i$. Thus, we have the expected total reward $r(\{i\}) \!=\! \Ex[R_i]$. While it is straightforward to characterize $r(S)$ for $k \!=\! 1$ as we have shown, it quickly becomes \emph{non-trivial} as the value of $k$ increases. It is because the expected total reward for $k \!>\! 1$ cannot be simply written as the sum of the expected rewards of vaccinating each individual node assuming that $k \!=\! 1$.

\section{Main Results}\label{se:results}

In this section, we first characterize $r(S)$ for $k \!=\! 2$. We then obtain a general expression for $r(S)$ for any $k \!\geq\! 1$.  We next establish that $\bm{\PP}$ is a monotone submodular maximization problem and finally propose a $(1\!-\!1/e)$-approximation greedy algorithm for solving $\bm{\PP}$.

\subsection{Reward Function for $k \!=\! 2$}

Fix $S \!=\! \{i,j\}$. Let $R_S$ be the total reward when nodes $i$ and $j$, or the node set $S$, are chosen to be vaccinated. Note that $r(S) \!=\! \Ex[R_S]$. To characterize $R_S$ and $r(S)$, we can consider the following two general cases based on where the two nodes are on a tree.

\vspace{1mm}
\noindent \textbf{Case (1): $N_i\cap N_j \!=\! \emptyset$}. This is the case when the set of descendants of $i$, $N_i$, does not intersect the set of $j$, $N_j$. Since $N_i$ and $N_j$ are the saved nodes when $i$ and $j$ are immunized, respectively, the rewards from immunizing $i$ and $j$ can be computed separately. They are just $n_i$ and $n_j$, respectively. We say that $i$ and $j$ are \emph{collateral} in this case. Thus, we have
\begin{equation*}
\begin{split}
  R_S &= n_i\mathds{1}\{I_i=1, I_j=0 \}+n_j\mathds{1}\{I_i=0, I_j=1\} \\
   &~~~+ (n_i+n_j)\mathds{1}\{I_i=1,I_j=1\}.
\end{split}
\end{equation*}
Taking expectations yields
\begin{align}
r(S) &= n_i\pr\{Z_i\!>\!\tau,Z_j\!<\!\tau\} + n_j\pr\{Z_i\!<\!\tau,Z_j\!>\!\tau\} \nonumber\\
		&~~~+(n_i+n_j)\pr\{Z_i>\tau,Z_j>\tau\} \nonumber\\
		&= n_i\pr\{Z_i>\tau\}+n_j\pr\{Z_j>\tau\}.\label{immediate_reward0}
\end{align}
Note that $Z_i$ and $Z_j$ may be correlated as the path from the root to $i$ and the path to $j$ may share a common edge.

\vspace{1mm}
\noindent \textbf{Case (2): $N_i\cap N_j\neq\emptyset$}. This is the case when the intersection between the set of descendants of $i$ and the one of $j$ is no longer empty. Thus, the rewards from immunizing $i$ and $j$ cannot be computed separately. We say that $i$ and $j$ are \emph{immediate} in this case. Note that $i$ and $j$ may not be immediate neighbor of each other (or they may not have a parent-child relationship). Without loss of generality, suppose that $i$ is $j$'s ancestor. Then, $i$'s descendants include all $j$'s descendants. Also, letting $d$ be the number of edges from $i$ to $j$, we see that $Z_j \!=\! Z_i \!+\! \sum_{l=1}^{d}X_l$. Thus, to compute $r(S)$, we only need to consider the following two cases, regarding whether $i$ and $j$ are finally immune after vaccination: a) $i$ is immune, and b) $i$ is infected, but $j$ is immune. In the former case, $j$ is also immune since $Z_i \!>\! \tau$ implies $Z_j \!>\! \tau$, i.e., $I_i \!=\! 1$ implies $I_j \!=\! 1$. Therefore, we have
\begin{align}
		r(S) &= n_i\pr\{I_i = 1\}+n_j\pr\{I_i = 0, I_j = 1\} \nonumber\\
		&= n_i\pr\{Z_i > \tau\}+n_j\pr\{Z_i < \tau < Z_j\} \nonumber\\
		&= (n_i-n_j)\pr\{Z_i > \tau\}+n_j\pr\{Z_j > \tau\},\label{immediate_reward}
\end{align}
which is from $\pr\{Z_i\!<\!\tau\!<\!Z_j\} \!=\! 1 \!-\! \pr\{Z_i\!>\!\tau\} \!-\! \pr\{Z_j\!<\!\tau\}$.

\subsection{General Form of Reward Function}

We next generalize the expressions of the expected total reward $r(S)$ in (\ref{immediate_reward0}) and (\ref{immediate_reward}) for $k \!=\! 2$ to the cases with $k \!\geq\! 1$. We show that $r(S)$ is in the form of a sum, where each summand is for each node $i$ in $S$. This summand involves the probability that node $i$ is finally immune after the immunization time and another term involving the reward of immunizing $i$, which is the number of its descendants, $n_i$. This second term, however, excludes the portion of $i$'s descendants that are saved by other nodes that are in $S$ and $i$'s descendants. This allows us to compute $r(S)$ correctly without double counting. Specifically, we have the following result:

\begin{theorem}\label{total_reward}
	The expected total reward of vaccinating each node in $S$ is
	\begin{equation}\label{reward_function_general}
		r(S)=\sum_{i\in S}\Big|N_i\setminus \underset{{j\in {N_i\cap S}}}{\cup}N_j\Big|\pr\{Z_i>\tau\}.
	\end{equation}	
    From the fact that $\underset{{j\in {N_i\cap S}}}{\cup} N_j \subset N_i$, we can further write (\ref{reward_function_general}) as
	\begin{equation}\label{reward_function}
		r(S)=\sum_{i\in S}\left(n_i- \Big|\underset{{j\in {N_i\cap S}}}{\cup}N_j\Big|\right)\pr\{Z_i>\tau\}.
	\end{equation}	
\end{theorem}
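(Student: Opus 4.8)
The plan is to separate the probabilistic content from a purely combinatorial identity about subsets of a tree. \emph{First}, I would argue that, for every realization of $(Z_i)_{i\in V}$ and $\tau$, the total reward equals
\[
  R_S \;=\; \Big|\bigcup_{i\in S:\,I_i=1} N_i\Big|,
\]
by exactly the reasoning behind (\ref{k1}) and (\ref{immediate_reward0})--(\ref{immediate_reward}): on a tree the epidemic reaches a node $v$ only along the unique root-to-$v$ path, and it is stopped before $v$ iff some vaccinated node $u$ on that path becomes immune before being infected, i.e.\ iff $u\in S$ and $Z_u>\tau$ (equivalently $I_u=1$), in which case the whole subtree $N_u$ is saved. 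The delicate point is that these blocking events are mutually consistent: if $i$ is an ancestor of $j$ then $Z_i\le Z_j$ (the root-to-$i$ path is a prefix of the root-to-$j$ path), so $I_i=1$ forces $I_j=1$; hence whenever an ancestor in $S$ is immune so is every descendant in $S$, and $R_S$ is just the size of the union of the subtrees below the immune vaccinated nodes.

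\emph{Second}, I would prove, with no probability involved, that for any $T\subseteq V$,
\[
  \Big|\bigcup_{i\in T} N_i\Big| \;=\; \sum_{i\in T}\Big|N_i\setminus\bigcup_{j\in N_i\cap T} N_j\Big|.
\]
Writing $B_i:=N_i\setminus\bigcup_{j\in N_i\cap T} N_j$, it suffices to check that $\{B_i\}_{i\in T}$ is a partition of $\bigcup_{i\in T} N_i$. For coverage: given $v$ in the union, the elements of $T$ that are strict ancestors of $v$ form a chain, since the ancestors of a node in a tree are totally ordered; let $i^\star$ be the deepest such element. Then $v\in N_{i^\star}$, and no $j\in N_{i^\star}\cap T$ can have $v\in N_j$, because such a $j$ would be an ancestor of $v$ lying strictly below $i^\star$, contradicting maximality; hence $v\in B_{i^\star}$. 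For disjointness: if $v\in B_i\cap B_{i'}$ with $i\ne i'$, then $i$ and $i'$ are both strict ancestors of $v$ and therefore comparable; if $i$ lies above $i'$ then $i'\in N_i\cap T$ and $v\in N_{i'}$, contradicting $v\in B_i$.

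\emph{Third}, I would assemble the formula by applying the identity with $T=S^\circ:=\{i\in S:\,I_i=1\}$, which gives $R_S=\sum_{i\in S^\circ}\big|N_i\setminus\bigcup_{j\in N_i\cap S^\circ}N_j\big|$, and then bridge to (\ref{reward_function_general}) using the monotonicity of the first step once more: if $i\in S^\circ$ and $j\in N_i\cap S$, then $j$ is a strict descendant of $i$, so $Z_j\ge Z_i>\tau$ and $j\in S^\circ$; thus $N_i\cap S=N_i\cap S^\circ$ for each $i\in S^\circ$, so each $i$-th summand is unchanged when the inner union runs over $N_i\cap S$, and extending the outer sum from $S^\circ$ to all of $S$ only appends terms with $I_i=0$. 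This yields $R_S=\sum_{i\in S}\big|N_i\setminus\bigcup_{j\in N_i\cap S}N_j\big|\,I_i$ almost surely; taking expectations and using $\Ex[I_i]=\pr\{Z_i>\tau\}$ with linearity gives (\ref{reward_function_general}), and (\ref{reward_function}) is immediate since $\bigcup_{j\in N_i\cap S}N_j\subseteq N_i$.

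I expect the combinatorial identity of the second step --- the charging of each node to its deepest $T$-ancestor together with the disjointness check --- to be the crux; the probabilistic part is light, since via $Z_i\le Z_j$ for $i$ an ancestor of $j$ it reduces to the elementary observation that every node below an in-time-immunized node is saved, which also quietly rules out the configuration ($I_i=1$, $I_j=0$ with $j$ below $i$) that would make the stated formula fail.
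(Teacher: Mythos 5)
Your proof is correct, but it takes a genuinely different route from the paper's. The paper argues by structural recursion at the level of expectations: it links the nodes of $S$ into one or more trees via the ``immediate'' relation, writes each such tree as a root $c_0$ with leaf children $L$ and subtrees, computes $r(\{c_0\}\cup L)$ by summing over all $2^{|L|}$ joint immunization states and manipulating joint probabilities such as $\pr\{Z_{c_0}<\tau,O_{l_i}\}=\pr\{Z_{l_i}>\tau\}-\pr\{Z_{c_0}>\tau\}$, and then asserts (without detail) that the argument extends to the general recursive case. You instead work pathwise: you identify $R_S$ almost surely with $\bigl|\bigcup_{i\in S^{\circ}}N_i\bigr|$ for $S^{\circ}=\{i\in S:I_i=1\}$, prove a purely deterministic partition identity by charging each saved node to its deepest $T$-ancestor (with disjointness resting on the total ordering of a node's ancestors in a tree), use the ancestor monotonicity $Z_i<Z_j$ to replace $S^{\circ}$ by $S$ inside the inner unions, and finish with a single application of linearity of expectation and $\Ex[I_i]=\pr\{Z_i>\tau\}$. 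Your version is arguably more complete---the paper's proof is explicitly a sketch whose inductive step is left to the reader---and it cleanly isolates the only probabilistic inputs (the mean of $I_i$, and the implication $I_i=1\Rightarrow I_j=1$ for $j$ a vaccinated descendant of $i$) from the combinatorics; the paper's recursion, in exchange, makes the tree decomposition of $S$ explicit, which feeds the intuition used later for the marginal-gain formulas in the submodularity proof and the greedy algorithm. The one convention worth stating explicitly in your first step is that a vaccinated node that is immune but has no immune vaccinated ancestor is not itself counted in the reward (only its descendants are), which is exactly what makes $R_S=\bigl|\bigcup_{i\in S^{\circ}}N_i\bigr|$ consistent with the paper's $k=1$ and $k=2$ expressions.
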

\begin{proof}[Proof (Sketch)]
Fix $S$. We first use the nodes in $S$ to construct one or multiple trees by creating edges between nearest `immediate' node pairs in $S$, while keeping their depth ordering. Since nodes between different trees are `collateral', the reward function for each tree can be calculated individually and added together in the end. See Figure~\ref{fig:Recur}(a) for an example, in which case there are two trees created, and the nodes in one tree are collateral to the ones in the other tree.

\begin{figure}[ht]
    \vspace{-2mm}
    \captionsetup[subfloat]{captionskip=2pt}
    \centering
    \subfloat[]{%
        \includegraphics[width=0.55\linewidth, trim=0.3mm 0.3mm 0.3mm 0mm, clip]{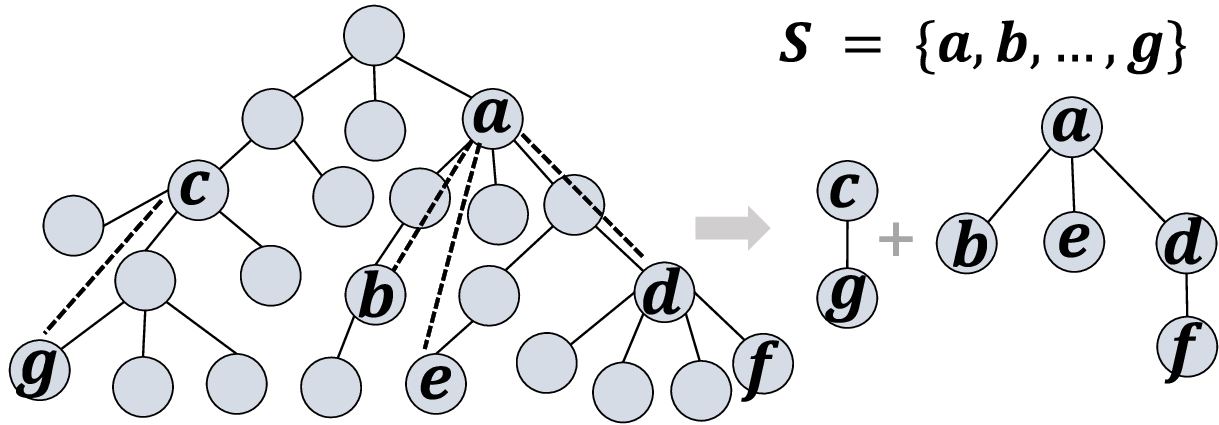}
    }\hspace{2.6mm}
    \subfloat[]{%
        \includegraphics[width=0.38\linewidth, trim=0.3mm 0.3mm 0.3mm 0mm, clip]{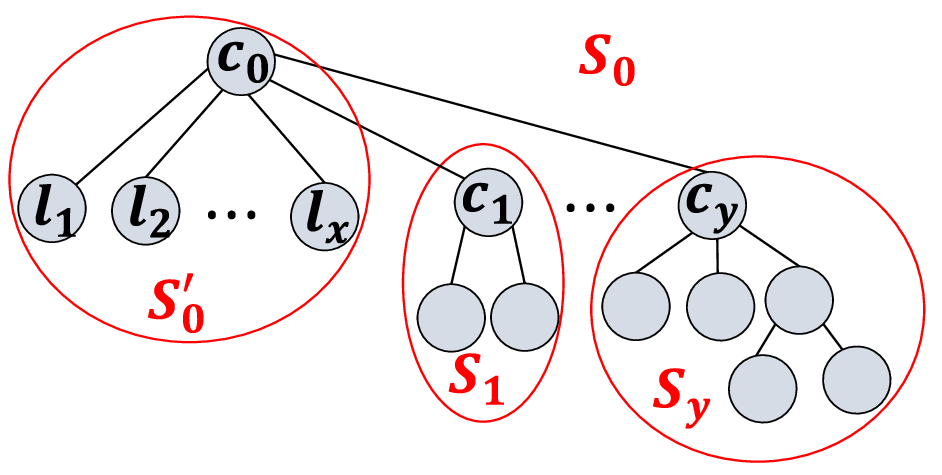}
    }\vspace{-1mm}
    \caption{Recursive structure.}
    \label{fig:Recur}
    \vspace{-2mm}
\end{figure}

We next characterize the reward function of a tree formed by a subset of $S$, say $S_0$. Due to the recursive structure of a tree, $S_0$ can be written as $S_0 := \{c_0, L, S'\}$, where $S_0$ contains root node $c_0$ and all its `children', which can be either leaf nodes $L=\{l_1,l_2,\cdots,l_x\}$ or subtrees $S'=\{S_1,S_2,\cdots,S_y\}$. See Figure~\ref{fig:Recur}(b) for an illustration. Note that each subtree has the same recursive structure as the one of $S_0$.

We start with a simpler case with $S'_0 := \{c_0, L\}$. We here characterize the reward function in this case. Let $\E_l$ be a collection of all possible combinations of the immunization states of all nodes in $L$. Note that $|\E_l|=2^x$. For node $l_i$, let $O_{l_i}$ denote that the vaccination of $l_i$ is successful, i.e., $O_{l_i}:= \{Z_{l_i}>\tau\}$. Then, we have
\begin{align*}
    &r(S'_0)=n_{c_0}\pr\{Z_{c_0}>\tau,O_{l_1},O_{l_2},\cdots,O_{l_x}\}\\
    &\quad\quad\quad + \sum_{A\in\E_l}\left(\sum^x_{i=1}{\indi}_{O_{l_i}}\cdot n_{l_i}\right)\pr\{Z_{c_0}<\tau,A\}\\
    &=n_{c_0}\pr\{Z_{c_0}>\tau\}+\sum^x_{i=1}n_{l_i}\left(\sum_{B\in\E_l\cap O_{l_i}}\pr\{Z_{c_0}<\tau,B\}\right)\\
    &=n_{c_0}\pr\{Z_{c_0}>\tau\}+\sum^x_{i=1}n_{l_i}\pr\{Z_{c_0}<\tau,O_{l_i}\}\\
    &=n_{c_0}\pr\{Z_{c_0}>\tau\}+\sum^x_{i=1}n_{l_i}\left(\pr\{Z_{l_i}>\tau\}-\pr\{Z_{c_0}>\tau\}\right)\\	&=\left(n_{c_0}-\sum^x_{i=1}n_{l_i}\right)\pr\{Z_{c_0}>\tau\}+\sum^x_{i=1}n_{l_i}\pr\{Z_{l_i}>\tau\}\\
    &=\left|N_{c_0}\setminus\underset{{j\in{N_{c_0}\cap S'_0}}}{\cup}N_j\right|\pr\{Z_{c_0}>\tau\}+\sum^x_{i=1}\left|N_{l_i}\right|\pr\{Z_{l_i}>\tau\}.
\end{align*}
where the fourth equality can be shown by using the same argument as used to obtain (\ref{immediate_reward}). It is clear that the expression of $r(S'_0)$ satisfies (\ref{reward_function_general}).
 
We can then extend the above arguments to characterize the reward function for the case of $S_0 := \{c_0, L, S'\}$ and show that $r(S)$ is given in (\ref{reward_function_general}).
\end{proof}

Consider an example network with $S \!=\! \{a,b,c,d,e\}$ in Figure~\ref{fig:3}. We can see that the reward by saving $a$, or the reward term for $a$ in (\ref{reward_function}), does not include the descendants of nodes $b$ and $c$, i.e., $N_b$ and $N_c$, respectively, to avoid double counting. Note that the descendants of node $e$, $N_e$, are already excluded as they are a subset of $N_c$. In addition, it is straightforward to see that $r(S)$ in (\ref{reward_function}) reduces to the expressions in (\ref{immediate_reward0}) and (\ref{immediate_reward}) for $k \!=\! 2$ and also to the one in (\ref{k1}) for $k \!=\! 1$.

While we can obtain a closed-form expression of $r(S)$ in (\ref{reward_function}) for any $k \!\geq\! 1$, it is infeasible to obtain the optimal solution to $\bm{\mathcal{P}}$ analytically. Since $r(S)$ depends on the underlying topology and the relationships among the nodes in $S$, we need to evaluate $r(S)$ for all possible choices of $S$. However, it is quickly infeasible to explore the search space since $\binom{n}{k} \!=\! \Theta(n^k)$ easily becomes a prohibitively large number even for a moderate $k$.

\begin{figure}[t]
	\centering
	\includegraphics[width=0.6\linewidth, trim=0.3mm 0.3mm 0.3mm 0mm, clip]{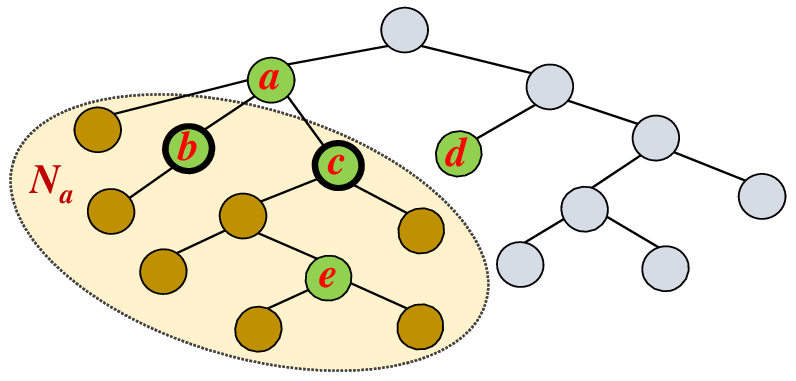}
	\vspace{-3mm}
	\caption{An example for characterizing the reward from vaccinating an arbitrary node set.}
	\label{fig:3}
    \vspace{-5mm}
\end{figure}

\subsection{Submodularity of Reward Function}

We next turn our attention to the characteristics of the set function $r$ in~(\ref{reward_function}). We below show that $r$ is non-negative, monotone, and submodular, which implies that $\bm{\mathcal{P}}$ is a monotone submodular maximization problem. While it is straightforward to see that $r$ is non-negative, it is non-trivial to show its monotonicity and submodularity.

\vspace{-1mm}
\begin{theorem}\label{submodularity}
	$r$ is a monotone submodular function.
\end{theorem}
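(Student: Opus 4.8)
The plan is to re-express the set function in~(\ref{reward_function_general}) as a sum, over all nodes $u\in V$, of the probability that $u$ possesses at least one vaccinated strict ancestor that becomes immune, and then to invoke the classical fact that probabilistic coverage functions are monotone and submodular. Writing $i\prec u$ for ``$i$ is a strict ancestor of $u$'', I would first expand $\big|N_i\setminus\bigcup_{j\in N_i\cap S}N_j\big|=\sum_{u\in V}\mathds{1}\{i\prec u,\ \nexists\,j\in S:\,i\prec j\prec u\}$ and swap the order of summation in~(\ref{reward_function_general}). For each fixed $u$, the resulting inner sum has a single nonzero term, picking out the \emph{deepest} element of $S$ that is a strict ancestor of $u$ (if one exists); denote it $\iota_u(S)$. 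The key structural observation is that if $i_1\prec\dots\prec i_m=\iota_u(S)$ are all the strict ancestors of $u$ lying in $S$, then $Z_{i_1}\le\dots\le Z_{i_m}$ always holds, since $Z_{i_t}=\sum_{l=1}^{d_{i_t}}X_l$ and the path to $i_m$ passes through every $i_t$; hence $\{Z_{i_t}>\tau\}\subseteq\{Z_{\iota_u(S)}>\tau\}$ for all $t$, so $\bigcup_{i\in S\cap A_u}\{Z_i>\tau\}=\{Z_{\iota_u(S)}>\tau\}$. This yields the representation
\begin{equation*}
 r(S)=\sum_{u\in V}\pr\!\Big(\textstyle\bigcup_{i\in S\cap A_u}\{Z_i>\tau\}\Big)=:\sum_{u\in V}f_u(S),
\end{equation*}
where the union over an empty index set is the empty event, so $f_u(\emptyset)=0$.

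With this in hand, the rest is routine. For each fixed $u$, put $E_i:=\{Z_i>\tau\}$ for $i\in A_u$; then $f_u(S)=\pr\big(\bigcup_{i\in S\cap A_u}E_i\big)$ is (a restriction of) a probabilistic coverage function. Monotonicity is immediate, since $S\subseteq T$ gives $\bigcup_{i\in S\cap A_u}E_i\subseteq\bigcup_{i\in T\cap A_u}E_i$. For submodularity, the marginal gain $f_u(S\cup\{v\})-f_u(S)$ equals $0$ when $v\notin A_u$ and equals $\pr\big(E_v\setminus\bigcup_{i\in S\cap A_u}E_i\big)$ when $v\in A_u$; because $\bigcup_{i\in S\cap A_u}E_i\subseteq\bigcup_{i\in T\cap A_u}E_i$ whenever $S\subseteq T$, this quantity is nonincreasing in the conditioning set, i.e.\ $f_u(T\cup\{v\})-f_u(T)\le f_u(S\cup\{v\})-f_u(S)$. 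Since marginal gains are also nonnegative, each $f_u$ is nonnegative, monotone, and submodular, hence so is the finite sum $r=\sum_{u\in V}f_u$; combined with the already-noted nonnegativity of $r$, this proves the theorem and justifies applying the greedy $(1-1/e)$-approximation to $\bm{\PP}$.

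I expect the only real work to be the reinterpretation step: converting the static, purely combinatorial expression~(\ref{reward_function_general}) into coverage form. Its crux is the monotonicity of $Z_i$ along every root-to-node path, which is exactly what collapses ``some vaccinated strict ancestor of $u$ is immune'' to ``the deepest one is immune'', turning each per-node contribution into the probability of a union of events over the chosen set $S$. A more hands-on alternative, which I would fall back on if the coverage reduction were judged too terse, is to compute the marginal directly from~(\ref{reward_function}): one finds $r(S\cup\{v\})-r(S)=\big|N_v\setminus\bigcup_{j\in N_v\cap S}N_j\big|\big(\pr\{Z_v>\tau\}-\pr\{Z_{\iota_v(S)}>\tau\}\big)$ (with $\pr\{Z_{\iota_v(S)}>\tau\}$ read as $0$ when $v$ has no vaccinated strict ancestor in $S$), and then checks that both factors are nonnegative and nonincreasing as $S$ grows — the cardinality factor because the subtracted union only grows, and the probability factor because $\iota_v(S)$ can only move deeper, hence $Z_{\iota_v(S)}$ can only increase, as $S$ grows. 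Either route makes the monotonicity and submodularity of $r$ transparent.
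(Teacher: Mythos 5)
Your argument is correct, and your main route is genuinely different from the paper's. The paper proves the theorem by brute-force manipulation of the formula in~(\ref{reward_function}): it expands the marginal gain $\Delta(S,u)=r(S\cup\{u\})-r(S)$, runs a case analysis on whether $S\cap A_u$ is empty and on where $u$'s closest vaccinated ancestor $v$ sits, and lands on exactly the two expressions you give in your fallback, namely $\Delta(S,u)=\bigl(n_u-\bigl|\bigcup_{j\in N_u\cap S}N_j\bigr|\bigr)\pr\{Z_u>\tau\}$ when $S\cap A_u=\emptyset$ and $\bigl(n_u-\bigl|\bigcup_{j\in N_u\cap S}N_j\bigr|\bigr)\bigl(\pr\{Z_u>\tau\}-\pr\{Z_v>\tau\}\bigr)$ otherwise; submodularity is then checked by comparing these factors for $S_1\subseteq S_2$ in three further cases. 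Your coverage reduction replaces all of that with one structural observation — $Z$ is nondecreasing along any root-to-node path, so the events $\{Z_i>\tau\}$ indexed by $u$'s vaccinated ancestors are nested and their union is the event for the deepest one — after which $r(S)=\sum_{u\in V}\pr\bigl(\bigcup_{i\in S\cap A_u}\{Z_i>\tau\}\bigr)$ is a sum of probabilistic coverage functions and monotonicity/submodularity follow from standard closure properties. What you gain is a shorter, less error-prone proof that also explains \emph{why} the function is submodular (it is literally a weighted coverage function) and makes nonnegativity and monotonicity immediate. What the paper's route buys is the explicit closed form of $\Delta(S,u)$, which is needed verbatim in Lines 9 and 12 of Algorithm~\ref{algo}; your fallback derivation recovers that same formula from the coverage representation, so nothing is lost. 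One cosmetic difference: the paper asserts strict positivity $\Delta(S,u)>0$, which fails for leaf nodes ($n_u=0$); your weaker claim of nonnegative marginals is the correct statement and is all that monotonicity requires.
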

\vspace{-3mm}
\begin{proof}[Proof (Sketch)]	
\textbf{Part I.} We here show that $r(S)$ is monotone. Letting $S\!\subseteq\! V$ and $u \!\in\! V \setminus S$, we need to show that the marginal gain is non-negative, i.e., $\Delta(S,u) := r(S\cup\{u\}) - r(S)\geq 0$, for all $S$ and $u$. Observe that
\begin{align}
&\Delta(S,u)=\sum_{i\in S\cup \{u\}}\left(n_i- \Big|\underset{j\in N_i\cap (S\cup\{u\})}{\cup}N_j\Big|\right)\pr\{Z_i>\tau\} \nonumber\\
&\quad\quad\quad\quad - \sum_{i\in S}\left(n_i- \Big|\underset{j\in N_i\cap S}{\cup}N_j\Big|\right)\pr\{Z_i>\tau\} \nonumber\\
&=\left(n_u- \Big|\underset{j\in (N_u\cap S)\cup (N_u\cap \{u\})}{\cup}N_j\Big|\right)\pr\{Z_u>\tau\} \nonumber\\
&-\sum_{i\in S}\left(\Big|\underset{j\in (N_i\cap S)\cup (N_i\cap \{u\})}{\cup}N_j\Big| - \Big|\underset{j\in N_i\cap S}{\cup}N_j\Big|\right)\pr\{Z_i \!>\! \tau\}. \label{sss}
\end{align}

\noindent (i) $S\cap A_{u}=\emptyset$: We have $N_i\cap \{u\}=\emptyset$ for all $i\in S$, which implies that $u$ is not $i$'s descendant. Then, we have
\begin{equation*}	
\Big|\underset{{j\in {(N_i\cap S)\cup (N_i\cap \{u\}) }}}{\cup}N_j\Big|= \Big|\underset{{j\in {(N_i\cap S)\cup \emptyset }}}{\cup}N_j\Big|=\Big|\underset{{j\in {N_i\cap S}}}{\cup}N_j\Big|.
\end{equation*}
Since $N_u\cap \{u\} =\emptyset$, and from (\ref{sss}), we finally have
\begin{equation}
\Delta(S,u)=\left(n_u- \Big|\underset{{j\in {N_u\cap S }}}{\cup}N_j\Big|\right)\pr\{Z_u>\tau\}>0. \label{sub-result}
\end{equation}

\vspace{1mm}
\noindent (ii) $S\cap A_{u}\neq\emptyset$: If $N_i\cap \{u\} =\emptyset$ for all $i\in S$, then we have $\Delta(S,u) > 0$, as shown above. We thus here focus on the case when $N_i\cap \{u\} \neq\emptyset$, in which case $N_i\cap \{u\}=\{u\}$. Note that $u$ is $i$'s descendant, and $i$ is $u$'s ancestor, i.e., $i\in A_{u}$. We can then show that
\begin{align}
&\Delta(S,u) = \left(n_u- \Big|\underset{{j\in {N_u\cap S }}}{\cup}N_j\Big|\right)\pr\{Z_u>\tau\} \nonumber\\
&\quad -\sum_{i\in S\cap A_{u}}\left(n_u- \Big|N_u\cap\underset{{j\in {N_i\cap S}}}{\cup}N_j\Big|\right)\pr\{Z_i>\tau\}.\label{ttt}
\end{align}

Let $v$ be $u$'s `closest' ancestor in $S$, i.e., there is no node in $S$ that is the ancestor of $u$ and the descendant of $v$. We now consider the following two cases for the second term in (\ref{ttt}). 

\vspace{1mm}
\noindent (ii-a) Case when there exists node $i\in S\cap A_{u}$ such that $i\neq v$: Observe that $N_v\subseteq \underset{{j\in {N_i\cap S}}}{\cup}N_j$. Since $N_u\subseteq N_v$, we have $N_u\!\subseteq\! \underset{{j\in {N_i\cap S}}}{\cup}N_j$, which implies that $N_u\cap\underset{{j\in {N_i\cap S}}}{\cup}N_j=N_u$. Thus, the second term in (\ref{ttt}) becomes zero. It follows from (\ref{sub-result}) that $\Delta(S,u) > 0$.

\vspace{1mm}
\noindent (ii-b) Case when $S\cap A_{u} = \{v\}$: Observe that
\begin{align}
&N_u\cap\underset{{j\in {N_v\cap S}}}{\cup}N_j
=N_u\cap\underset{{j\in {(\{u\}\cup N_u\cup (N_v\setminus \{u\} \setminus N_u))
            \cap S}}}{\cup}N_j\nonumber\\
&\quad=N_u\cap\left(\underset{{j\in {N_u\cap S}}}{\cup}N_j\cup \underset{{j\in ({N_v\setminus \{u\}\setminus N_u})\cap S}}{\cup}N_j\right)\nonumber\\
&\quad=\left(N_u\cap\underset{{j\in {N_u\cap S}}}{\cup}N_j\right)\cup \left(N_u\cap\underset{{j\in ({N_v\setminus \{u\}\setminus N_u})\cap S}}{\cup}N_j\right) \nonumber\\
&\quad=N_u\cap\underset{{j\in {N_u\cap S}}}{\cup}N_j,\label{sub_proof}
\end{align}
where the last equality holds due to the fact that $N_u\cap\underset{{j\in ({N_v\setminus \{u\}\setminus N_u})\cap S}}{\cup}N_j=\emptyset$, since $u$ and $j$ are collateral for all $j\in {N_v\setminus \{u\} \setminus N_u}$. Then, from (\ref{sub_proof}), we can show the following: If $N_u\cap S =\emptyset$, then we have
\begin{equation*}
\Delta(S,u)=n_u\Big(\pr\{Z_u>\tau\}-\pr\{Z_v>\tau\}\Big).
\end{equation*}
Also, if $N_u\cap S \neq \emptyset$, then we have
\begin{equation}
\Delta(S,u)\!=\!\left(\!n_u\!-\!\Big|\underset{{j\in {N_u\cap S }}}{\cup}\!N_j\!\Big|\right)\!\!\Big(\pr\{Z_u\!>\!\tau\}-\pr\{Z_v\!>\!\tau\}\Big).\label{sub_conclusion}
\end{equation}
Since $n_u\geq \Big|\underset{j\in N_u}{\cup}N_j\Big|+1$, we have $n_u- \Big|\underset{{j\in {N_u\cap S }}}{\cup}N_j\Big|>0$.
Also, since $v$ is $u$'s ancestor, we have $Z_v<Z_u$ almost surely, thereby leading to $\pr\{Z_u \!>\! \tau\} \!>\! \pr\{Z_v \!>\! \tau\}$. Thus, we have $\Delta(S,u)>0$.

Therefore, from (i) and (ii), it follows that $\Delta(S,u)>0$.

\vspace{1mm}
\noindent \textbf{Part II.} We next show that $r(S)$ is submodular. Fix $S_1\subseteq S_2 \subseteq V$ and $u \in V \setminus S_2$. We need to show that $\Delta(S_1,u) \geq \Delta(S_2,u)$. To this end, we need to consider the following three cases:

\vspace{1mm}
\noindent (i) $S_1\cap A_{u}\neq \emptyset$ and $S_2\cap A_{u}\neq \emptyset$: Let $v_1$ be $u$'s closest ancestor in $S_1$, and let $v_2$ be $u$'s closest ancestor in $S_2$. By following a similar argument as used to derive (\ref{sub_conclusion}), we can show that
\begin{equation*}
\Delta(S_1,u)\!\!=\!\!\left(n_u\!-\!\Big|\underset{{j\in {N_u\cap S_1 }}}{\cup}N_j\Big|\right)\!\!\Big(\pr\{Z_u>\tau\}\!-\!\pr\{Z_{v_1}>\tau\}\Big).
\end{equation*}
Similarly for $\Delta(S_2,u)$. We can see that $d_{v_1}\leq d_{v_2}$ due to the possible closer ancestor of $u$ than $v_1$ from $S_2\setminus S_1$. Since $v_1$ and $v_2$ are both $u$'s ancestors, they are immediate. Thus, we have $Z_{v_1}<Z_{v_2}$ almost surely, thus implying that $\pr\{Z_{v_2}>\tau\}>\pr\{Z_{v_1}>\tau\}$. In addition, since $\underset{{j\in {N_u\cap S_2 }}}{\cup}N_j=\{\underset{{j\in {N_u\cap S_1 }}}{\cup}N_j\}\cup \{\underset{{j\in {N_u\cap \{S_2\setminus S_1\} }}}{\cup}N_j\}$, we have $\underset{{j\in {N_u\cap S_1 }}}{\cup}N_j\subseteq\underset{{j\in {N_u\cap S_2 }}}{\cup}N_j$. Thus, we have $n_u- \Big|\underset{{j\in {N_u\cap S_1 }}}{\cup}N_j\Big|\geq n_u- \Big|\underset{{j\in {N_u\cap S_2 }}}{\cup}N_j\Big|$. As a result, $\Delta(S_1,u) \geq \Delta(S_2,u)$.
	
\vspace{1mm} 
\noindent (ii) $S_1\cap A_{u}= \emptyset$ and $S_2\cap A_{u}=\emptyset$: We can show that
\begin{equation*}
\Delta(S_1,u) = \left(n_u- \Big|\underset{{j\in {N_u\cap S_1 }}}{\cup}N_j\Big|\right)\pr\{Z_u>\tau\}.
\end{equation*}
Similarly for $\Delta(S_2,u)$. By using a similar argument as the one for case (i), we have $\Delta(S_1,u) \geq \Delta(S_2,u)$.

\vspace{1mm} 
\noindent (iii) $S_1\cap A_{u}= \emptyset$ and $S_2\cap A_{u}\neq\emptyset$: We can show that
\begin{align*}
&\Delta(S_1,u)=\left(n_u- \Big|\underset{{j\in {N_u\cap S_1 }}}{\cup}N_j\Big|\right)\pr\{Z_u>\tau\},\\
&\Delta(S_2,u)\!=\!\left(n_u\!-\!\Big|\underset{{j\in {N_u\cap S_2 }}}{\cup}N_j\Big|\right)\!\!\Big(\pr\{Z_u\!>\!\tau\}\!-\!\pr\{Z_{v_2}\!>\!\tau\}\Big).
\end{align*}
By using a similar argument as the one for case (i), we have $\Delta(S_1,u)> \Delta(S_2,u)$. Therefore, we complete the proof of  $\Delta(S_1,u) \geq \Delta(S_2,u)$. 
\end{proof}
\vspace{-2mm}

\subsection{Greedy Algorithm}
Thanks to the nice properties of $r$, we are able to develop a greedy algorithm for solving $\bm{\mathcal{P}}$, which is summarized in Algorithm~\ref{algo}. For any given budget $k$, it is, in essence, to find node $u^*$ that maximizes the marginal gain $\Delta(S,u)$ every iteration until the size of the resulting set $S$ becomes $k$. Therefore, this algorithm naturally achieves a $(1\!-\!1/e)$-approximation performance guarantee. While we refer to the proof of Theorem~\ref{submodularity} for more details, we note that $\Delta(S,u)$ has two possible expressions, which are given in Line 9 and Line 12 in Algorithm~\ref{algo}, respectively. An expression is chosen depending on whether the (growing) set $S$ and the set of ancestors of $u$, $A_u$, have a non-empty intersection or not. Note that, in addition to the set $A_i$, the depth of node $i$, $d_i$, is also needed in Algorithm~\ref{algo}.

\setlength{\textfloatsep}{2pt}
\begin{algorithm}[t]
    \small
	\SetKwInOut{Input}{Input}
	\SetKwInOut{Output}{Output}
	\Input{$V$, budget $k$}
	\Output{$S$}
	$S \leftarrow \emptyset$\\
	\While{$|S|\leq k$}{
		\For{$u\in V \setminus S$}{
			\If{$S\cap A_u=\emptyset $}{$\Delta(S,u)=\left(n_u- \Big|\underset{{j\in {N_u\cap S }}}{\cup}N_j\Big|\right)\pr\{Z_u>\tau\}$}
			\Else{
				$v \leftarrow \underset{v\in S\cap A_u}{\arg\max}\;d_v $\\
				$\Delta(S,u)=\left(n_u- \Big|\underset{{j\in {N_u\cap S }}}{\cup}N_j\Big|\right)\Bigl(\pr\{Z_u >\tau\}\allowbreak -\pr\{Z_v>\tau\}\Bigr)$}
		}
		$u^* \leftarrow \underset{u\in V \setminus S}{\arg\max}\;\Delta(S,u) $\\
		$S \leftarrow S \cup \{u^*\}$ \\
	}
	\caption{Greedy Algorithm}\label{algo}
\end{algorithm}

\begin{corollary}
	Our greedy algorithm in Algorithm~\ref{algo} achieves a $(1\!-\!1/e)$-approximation to the optimal solution of $\bm{\PP}$.
\end{corollary}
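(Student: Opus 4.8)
The plan is to recognize the corollary as an immediate consequence of the classical analysis of the greedy algorithm for cardinality-constrained monotone submodular maximization (Nemhauser, Wolsey, and Fisher). First I would record the structural facts already in hand: by Theorem~\ref{submodularity} the set function $r$ is monotone and submodular, it is non-negative (noted just before Theorem~\ref{submodularity}), and it is normalized, $r(\emptyset)=0$, which is clear from the closed form~(\ref{reward_function}). Hence $\bm{\PP}$ is exactly the problem of maximizing a normalized non-negative monotone submodular function over $V$ subject to the cardinality constraint $|S|=k$. Since $r$ is monotone, the optimal value over $\{S:|S|\le k\}$ coincides with that over $\{S:|S|=k\}$, so it suffices to invoke the standard guarantee for the budget-$k$ version.

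Second, I would check that Algorithm~\ref{algo} is precisely the canonical greedy procedure for this problem: it starts from $S=\emptyset$ and, for $k$ iterations, adds the element $u^* \in V\setminus S$ maximizing the marginal gain $\Delta(S,u)=r(S\cup\{u\})-r(S)$. The only point requiring verification is that the two expressions used in Line~9 and Line~12 of Algorithm~\ref{algo} indeed evaluate $\Delta(S,u)$ correctly. But this is exactly what was derived inside the proof of Theorem~\ref{submodularity}: the case $S\cap A_u=\emptyset$ gives the Line-9 formula (see~(\ref{sub-result})), while the case $S\cap A_u\neq\emptyset$, with $v$ taken to be the deepest — equivalently, the closest — ancestor of $u$ in $S$, gives the Line-12 formula (see~(\ref{sub_proof}) and~(\ref{sub_conclusion}), noting that the $N_u\cap S=\emptyset$ subcase is the special case of~(\ref{sub_conclusion}) with $|\cup_{j\in N_u\cap S}N_j|=0$). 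Thus Algorithm~\ref{algo} is a faithful implementation of the greedy rule.

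Finally, I would apply the Nemhauser–Wolsey–Fisher bound: for a normalized non-negative monotone submodular $f$ and the set $S_k$ produced by $k$ greedy steps, one has $f(S_k)\ge\bigl(1-(1-1/k)^k\bigr)\,\mathrm{OPT}\ge(1-1/e)\,\mathrm{OPT}$, where here $\mathrm{OPT}=r(S^\star)$ is the optimal value of $\bm{\PP}$. This yields the claimed $(1-1/e)$-approximation and completes the proof.

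I do not anticipate a genuine obstacle here: all the substantive work — establishing monotonicity and submodularity of $r$ and deriving the marginal-gain formulas that the algorithm uses — is already done in Theorems~\ref{total_reward} and~\ref{submodularity}. The only care needed is the bookkeeping that matches Lines~9 and~12 of Algorithm~\ref{algo} to the two cases in the proof of Theorem~\ref{submodularity}, and the routine observation that monotonicity lets us move freely between the $|S|=k$ and $|S|\le k$ formulations before invoking the standard greedy guarantee.
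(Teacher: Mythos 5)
Your proposal is correct and follows exactly the same route as the paper, which proves the corollary by citing Theorem~\ref{submodularity} together with the Nemhauser--Wolsey--Fisher guarantee; you simply spell out the bookkeeping (normalization, the equivalence of the $|S|=k$ and $|S|\le k$ formulations, and the match between Lines~9 and~12 of Algorithm~\ref{algo} and the marginal-gain formulas derived in the proof of Theorem~\ref{submodularity}) that the paper leaves implicit.
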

\vspace{-3mm}
\begin{proof}
This result follows from Theorem~\ref{submodularity} and~\cite{nemhauser1978analysis}.
\end{proof}

\section{Discussions on Multiple-Source Scenarios}\label{se:results2}

In this section, we discuss how our greedy algorithm can be applied to the scenarios with multiple infection sources. We observe that with a proper transformation of the underlying tree structure, the cases with multiple sources can boil down to three fundamental cases.

Before going into the details of each fundamental case, we consider the scenario where the infection source is not the root node of a tree. In this case, we can easily transform the tree into another one having the source of infection as the root while updating the parameters based on the structure of the transformed one. See Figure~\ref{fig:variant} for an illustration on this transformation. Thus, our discussion below is done based on the (transformed) tree whose root is a source of infection in addition to other sources.

\subsection{Disconnected Trees with Different Sources}\label{9.1}

We first consider the case where there are two disconnected trees, each of whose root is the source of infection for each tree. The propagation of infection over each tree is also governed by the SI model with the same infection rate $\lambda$. Let $G_i \!=\! (V_i, E_i)$, $i \!=\! 1,2$, denote each disconnected tree. We can then write the expected total reward $r(S)$ as follows:
\begin{equation}\label{xxxx}
	r(S)=r_{G_1}(S_1)+r_{G_2}(S_2)=r_{G_1}(S)+r_{G_2}(S),
\end{equation}
where $S\!=\!S_1 \cup S_2$, $S_1 \!\subseteq\! V_1$, and $S_2 \!\subseteq\! V_2$. From Theorem~\ref{submodularity}, we know that $r_{G_1}(S_1)$ is submodular. Since a sum of two submodular functions is also submodular, $r(S)$ in (\ref{xxxx}) is a submodular function. Similarly for the monotonicity. Thus, we can use Algorithm~\ref{algo} as follows.

In the first iteration, we run Algorithm~\ref{algo} on $G_1$ and $G_2$ independently to find candidate nodes $u^*_1$ and $u^*_2$ along with their corresponding maximum marginal gains $\Delta(S_1,u^*_1)$ and $\Delta(S_2,u^*_2)$, respectively. We then compare the values of $\Delta(S_1,u^*_1)$ and $\Delta(S_2,u^*_2)$, and add $u^*$ that leads to a larger gain into the node set $S$ for vaccination. We next run Algorithm~\ref{algo} on the tree where $u^*$ comes from, to find a new candidate node. We compare this new candidate node with the one with a smaller gain in the previous iteration. Again, the candidate node that gives a larger marginal gain in this iteration is added into $S$. We repeat the above process until the size of $S$ reaches $k$.

This argument can be readily extended to the cases with three or more disconnected trees having their roots as the infection sources. In other words, we just need to run Algorithm~\ref{algo} on each tree individually to update a candidate node and select the one for $S$ that leads to the largest marginal gain across the trees in each iteration. This process is repeated until the size of $S$ equals $k$.

\begin{figure}[t]
	\centering
	\includegraphics[width=0.8\linewidth, trim=1mm 1mm 1mm 0mm, clip]{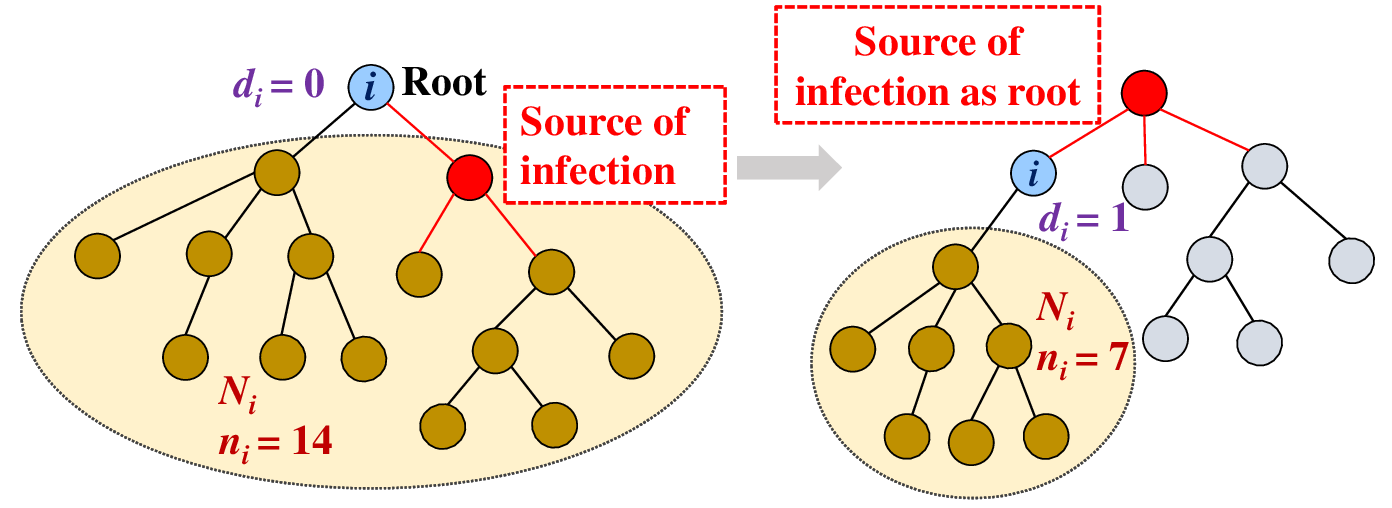}
	\vspace{-2mm}
	\caption{Tree structure transformation.}
	\label{fig:variant}
    \vspace{1mm}
\end{figure}

\subsection{Connected Multiple Sources}\label{9.2}

The next fundamental case is when there are multiple initially infected nodes and they form a (sub)tree. This case would correspond to the following scenario:  The spread of a viral epidemic starts from the root, and it lasts for a certain amount of time, so there are multiple infected nodes. Since then, $k$ vaccines are available. In other words, the `vaccine intervention' takes place after some initial delay (until vaccines are available). Note that there still exists the immunization time $\tau$ after which the vaccines come into effect. 

Observe that the epidemic continues to spread over the tree from `frontier' infected nodes, which are the \emph{leaf} nodes of a tree made by the initially infected nodes. Let $G'$ be the tree by the initially infected nodes (multiple infection sources). We first remove all the edges in $G'$ from the original tree $G$, which are the edges among the initially infected nodes. Then, $G$ can be split into a set of trees in each of which the root is a source of infection while other nodes, if any, are healthy nodes. We provide an example in Figure~\ref{fig:Multiple}, where the highlighted edges in $G'$ are removed, leading to three (sub)trees $G_1$, $G_2$, and $G_3$. Hence, this second case now boils down to the case in Section~\ref{9.1}, in which there are multiple disconnected trees, each of whose root is the (only) source of infection for each tree. We can then apply our greedy algorithm in Algorithm~\ref{algo} as explained above.

\begin{figure}[t]
	\centering
	\includegraphics[width=0.4\linewidth, trim=1mm 1mm 1mm 0mm, clip]{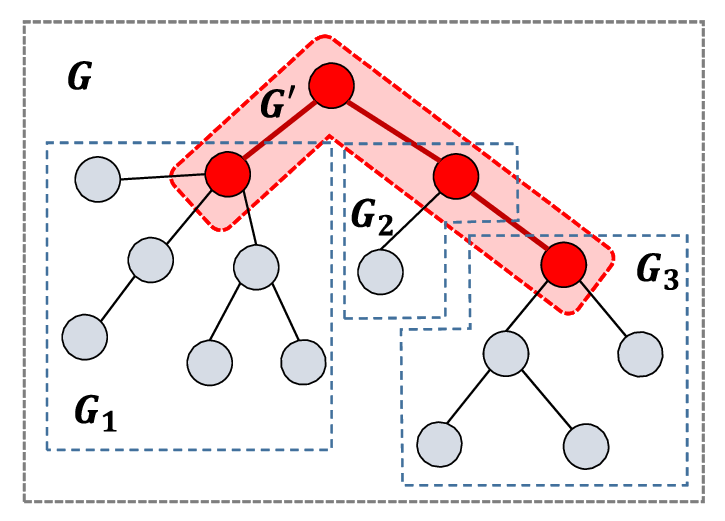}
	\vspace{-2mm}
	\caption{Infected nodes (sources) form a tree.}
	\label{fig:Multiple}
	\vspace{-3mm}
\end{figure}

\begin{figure}[t]
	\centering
	\includegraphics[width=1\linewidth, trim=1mm 1mm 1mm 0mm, clip]{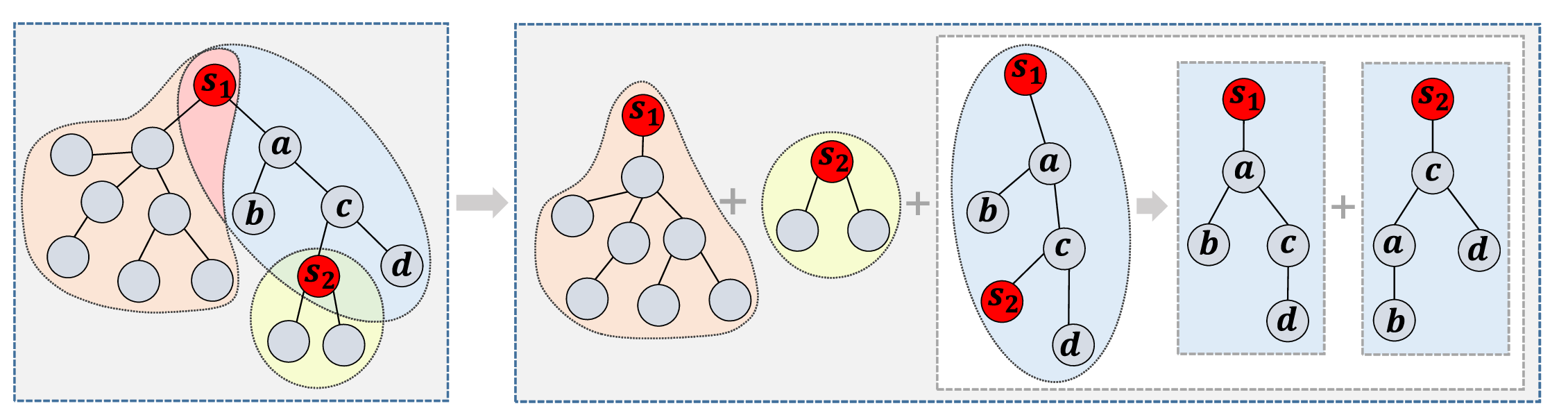}
	\vspace{-6mm}
	\caption{Distant infection sources.}
	\label{fig:multisource}
    \vspace{1mm}
\end{figure}

\begin{figure*}[t]
	\vspace{0mm}
	\captionsetup[subfloat]{captionskip=0pt}	
	\centering
	\subfloat[$n=100$]{%
		\includegraphics[width=0.25\linewidth, trim=0.3mm 0.3mm 0.3mm 0mm, clip]{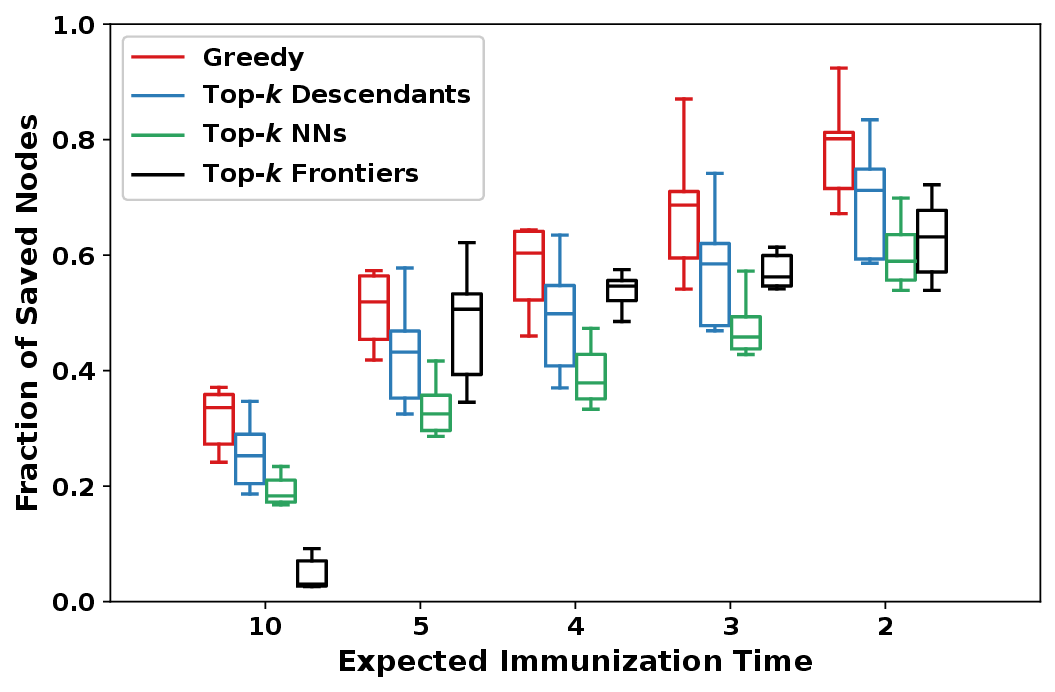}
	}\hspace{4mm}
	\subfloat[$n=500$]{%
		\includegraphics[width=0.25\linewidth, trim=0.3mm 0.3mm 0.3mm 0mm, clip]{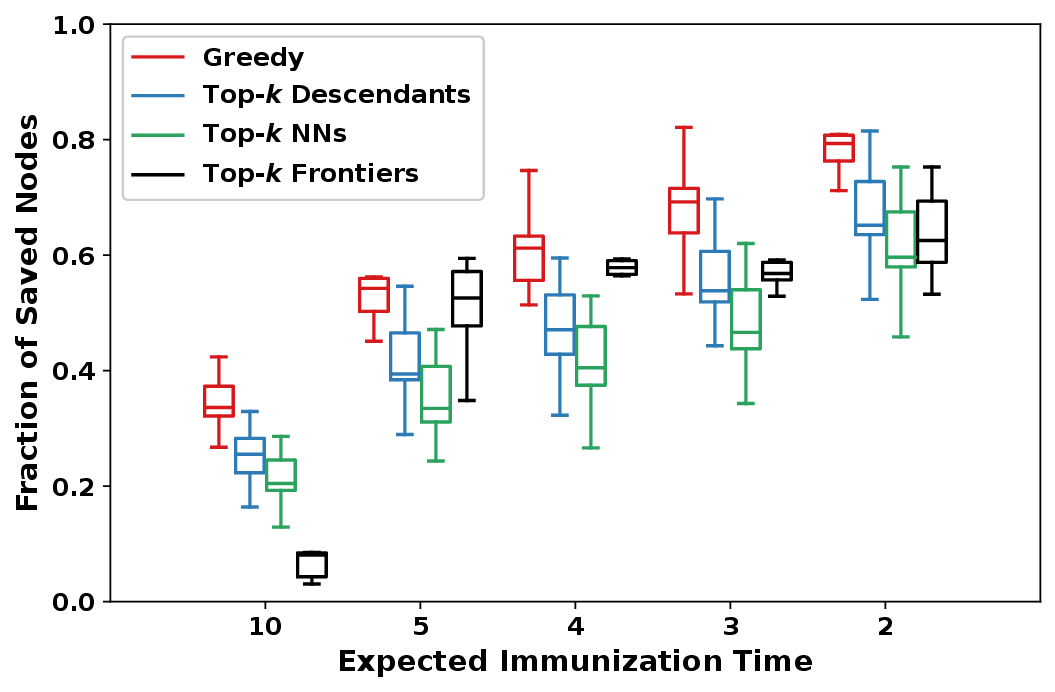}
	}\hspace{4mm}
	\subfloat[$n=1000$]{%
		\includegraphics[width=0.25\linewidth, trim=0.3mm 0.3mm 0.3mm 0mm, clip]{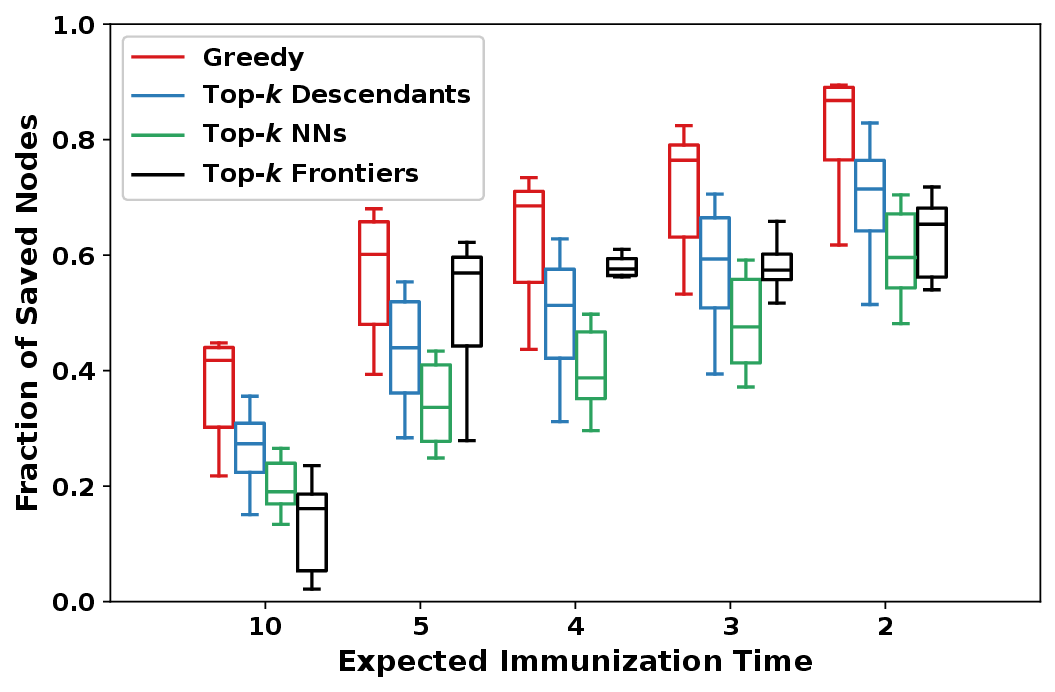}
	}
    \vspace{-1mm}
    \caption{Impact of the expected immunization time $\Ex[\tau]$ under random binary trees with different sizes.}
	\label{fig:s1}
	\vspace{-5mm}
\end{figure*}

\subsection{Distant Multiple Sources}\label{9.3}

The last fundamental case is when there are multiple sources in a tree that are possibly distant from each other while one of them is the root of the tree. In this case, we divide the tree into multiple subtrees as follows. First, for each source of infection, we extract a subtree whose root can be the source while the other nodes are healthy. This process leads to multiple disconnected trees, each of whose root is the source of infection for each tree. During the process, we also replicate the source nodes, i.e., they remain in the tree after the removal of each subtree. Once it is done, the remaining nodes of the tree form a subtree or multiple ones, where healthy nodes can now be infected by multiple sources. See Figure~\ref{fig:multisource} for an illustration. While we can handle the former case with the disconnected trees in the same way as we explained in Section~\ref{9.1}, we below propose an heuristic approach for the latter case.
 
Suppose that we have a tree $G'$ with $m$ multiple sources that we cannot further divide into smaller subtrees in the way as mentioned above. This is the case with the last subtree in Figure~\ref{fig:multisource}, where $m \!=\! 2$. Let $V_{s} \!:=\! \{s_1,s_2,..., s_{m}\}$ be the set of $m$ infection sources. For source $s_i$, we then build a separate \emph{single rooted} tree, say $G'_{s_i}$, having the source as its root while other sources are removed. See Figure~\ref{fig:multisource} (the last process) for an illustration. We assume that the propagation of infection takes place over each tree independently, and a susceptible node is infected as long as it is infected in one of the trees. Using this assumption and inspired by the expression of $r(S)$ in (\ref{reward_function}), we define the following new reward function:
\begin{align}\label{reward_function_special}
	r'(S) := \sum_{i\in S}|N'_i\setminus \underset{{j\in {N'_i\cap S}}}{\cup}N'_j|\prod_{s \in V_s}\pr(Z^s_i>\tau),
\end{align}	
where $N'_i \!:=\! \underset{{s \in V_{s}}}{\cap}N^s_i$, with $N^s_i$ being the set of descendants of node $i$ in $G'_{s}$ (with root $s$), and $Z^s_i$ is the time until node $i$ is infected in $G'_{s}$. Then, we can here use our greedy algorithm in Algorithm~\ref{algo}, but with $r'(S)$ in (\ref{reward_function_special}) instead of $r(S)$ in (\ref{reward_function}). 

\begin{figure*}[t]
	\vspace{0mm}
	\captionsetup[subfloat]{captionskip=0pt}	
	\centering
	\subfloat[$n=100$]{%
		\includegraphics[width=0.25\linewidth, trim=0.3mm 0.3mm 0.3mm 0mm, clip]{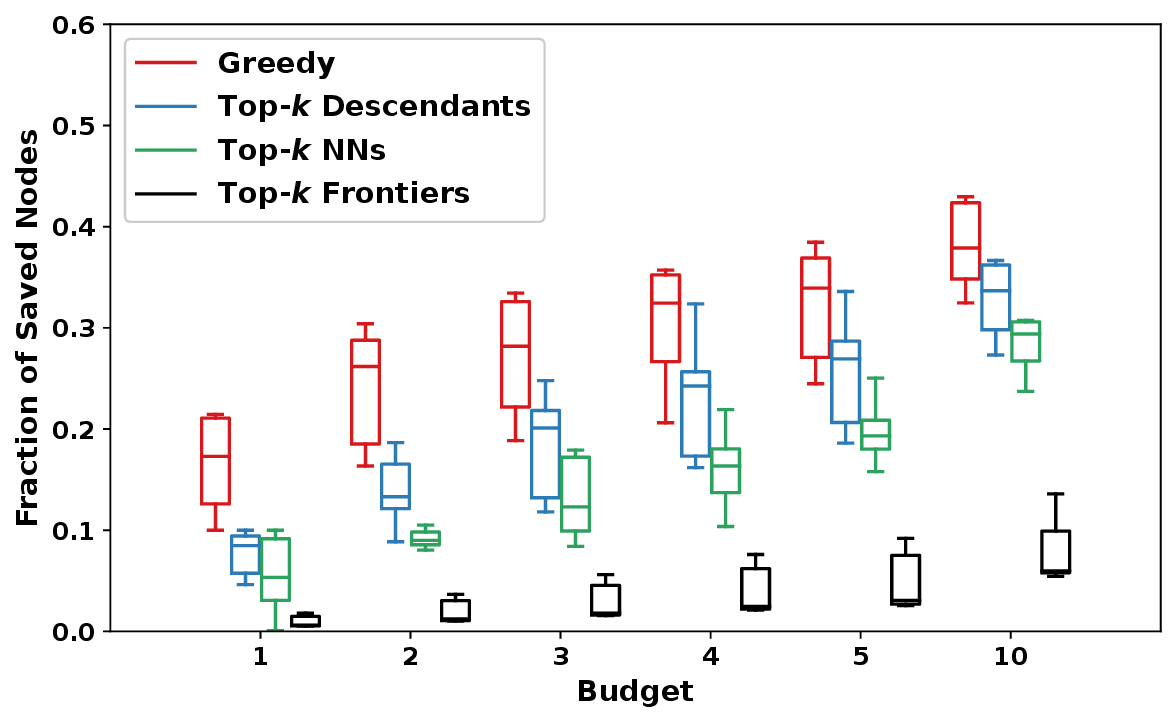}
	}\hspace{4mm}
	\subfloat[$n=500$]{%
		\includegraphics[width=0.25\linewidth, trim=0.3mm 0.3mm 0.3mm 0mm, clip]{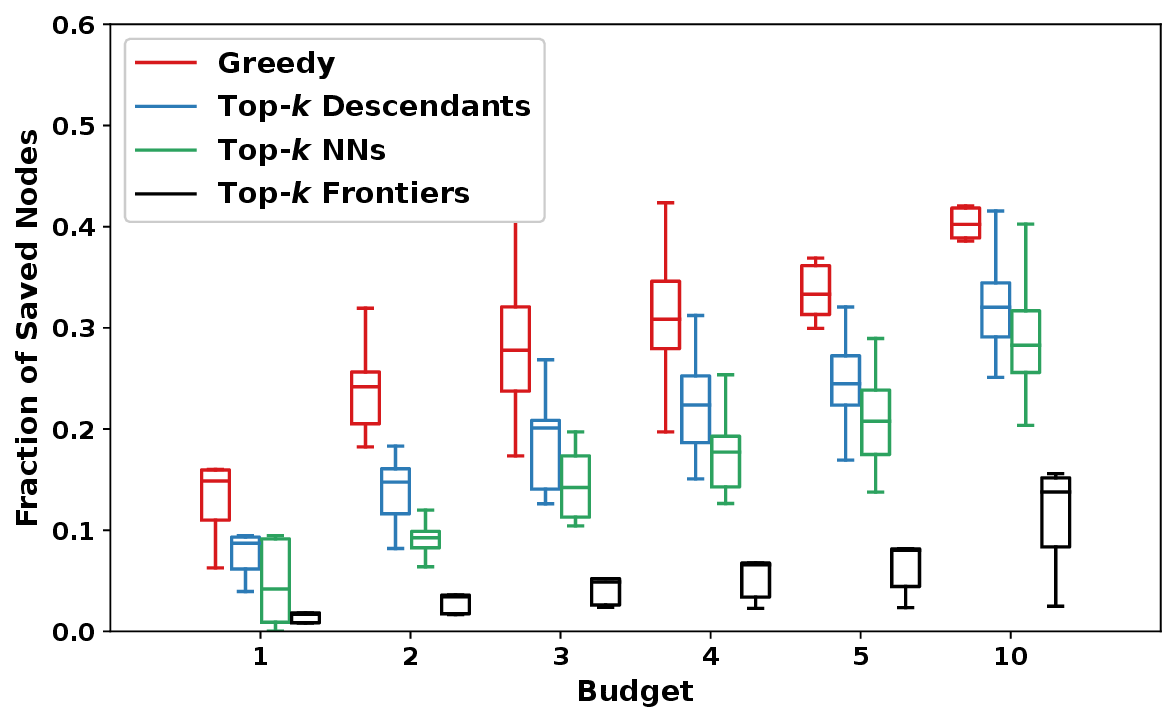}
	}\hspace{4mm}
	\subfloat[$n=1000$]{%
		\includegraphics[width=0.25\linewidth, trim=0.3mm 0.3mm 0.3mm 0mm, clip]{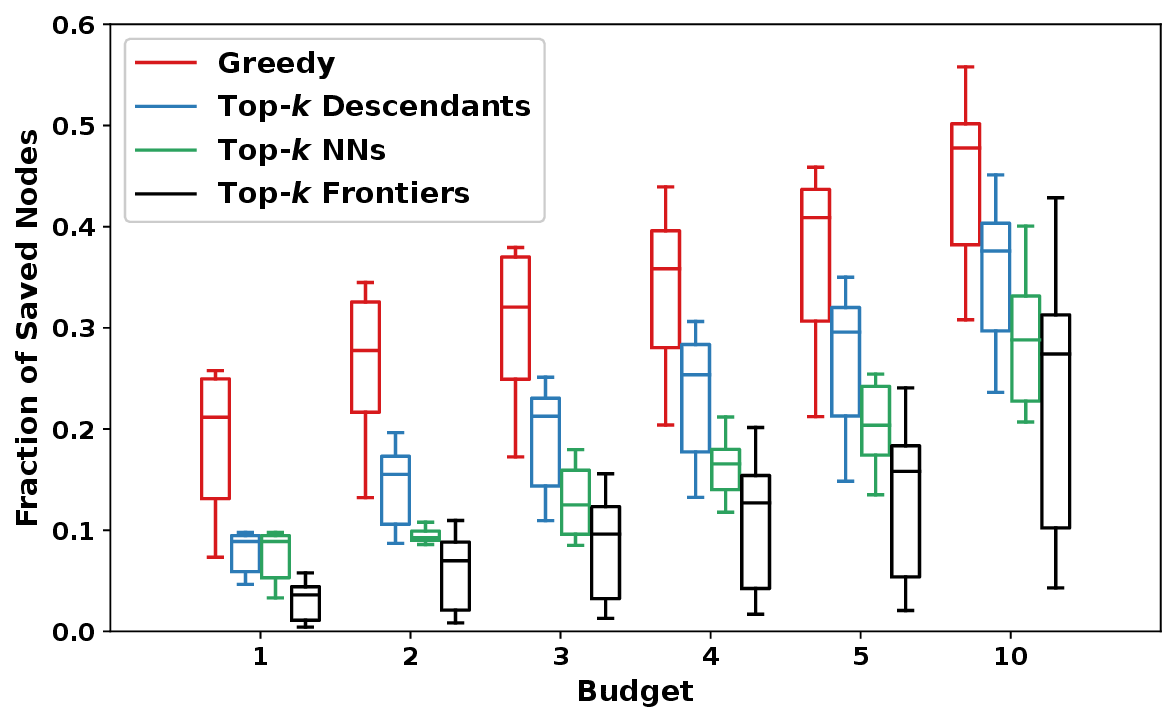}
	}
    \vspace{-1mm}
    \caption{Impact of the vaccination budget $k$ under random binary trees with different sizes.}
	\label{fig:s2}
	\vspace{-6mm}
\end{figure*}

\section{Numerical Simulations}\label{se:simulation}

In this section, we present extensive simulation results to demonstrate the efficacy of our greedy algorithm compared to four baseline vaccination strategies. We focus on general random trees, over each of which an epidemic starts from its root node. Specially, we evaluate the performance of the vaccination policies under random trees with different distributions on the number of children per node and random binary trees.

\subsection{Simulation Setup}

We first explain how we generate random trees and random binary trees. We use the standard Galton-Watson branching process to generate a rooted random tree. Given a root node as the starting point, we connect $\eta$ children to the root, where $\eta$ is an independent random variable with a probability distribution $\pr\{\eta \!=\! i\}$. Assuming that the root has a child or multiple children, we independently connect $\eta$ children to each child of the root. We repeat this process recursively to grow the tree until its size reaches a preset value. If the size of a generated tree does not reach the preset value, we discard the tree and restart the above process to regenerate a new one. For general random trees, we consider that $\eta$ has a Poisson or uniform distribution with mean $\Ex[\eta] \!>\! 0$. For random binary trees, we consider the following distribution for $\eta$: $\pr\{\eta \!=\! 0\} \!=\! 1/6$, $\pr\{\eta \!=\! 1\} \!=\! 1/6$, and $\pr\{\eta \!=\! 2\} \!=\! 4/6$.

The simulation parameters are set up as follows. We set the infection rate $\lambda \!=\! 1$, which means that the expected infection time $1/\lambda$ from a node to its child is a unit of time. We consider that the immunization time $\tau$ is exponentially distributed with rate $\mu$. We choose the values of $\mu$ such that the expected immunization time $\Ex[\tau] \!=\! 1/\mu$ ranges from two units of time to ten units of time. We also vary the value of the vaccination budget $k$ from 1 to 10. The size of each tree, which is either a random tree or a random binary tree, ranges from $n \!=\! 10^2$ to $n \!=\! 10^4$. For each size, we generate ten trees and simulate the epidemic spreading process $10^3$ times on each tree. All reported results are based on the average value over $10^3$ runs. To see the impact of each parameter on the performance of vaccination strategies properly, when we vary the value of a parameter, we set the values of all the other parameters to be the same in the simulations. We use the following default settings, unless otherwise specified: The tree size is $n \!=\! 10^3$, the expected immunization time is $\Ex[\tau] \!=\! 10$ units of time, the vaccination budget $k \!=\! 5$, and the expected number of children of each node in random trees is $\Ex[\eta] \!=\! 3$.

We next explain the baseline vaccination strategies. For the case of random binary trees, we consider three baseline policies to select a set of $k$ nodes to be vaccinated. It is clear that the number of descendants of a node plays an important role in contributing to the expected total reward, as can be seen from (\ref{reward_function}) and our greedy algorithm. We thus consider a baseline policy named `top-$k$ descendants' to select the top $k$ nodes that have the largest numbers of descendants for vaccination. Another baseline policy is to select the top $k$ nodes that are nearest neighbors (NNs) to the source of infection, which is equivalent to selecting the top $k$ nodes with the lowest depth from the root. We name this policy as `top-$k$ NNs'. We also consider a baseline policy named `top-$k$ frontiers' that takes both the immunization time $\tau$ and the number of descendants into account. It selects the top $k$ nodes that have the largest numbers of descendants for vaccination among the nodes which are $\Ex[\tau]$ or more layers away from the source of infection (the root).
For the case of random trees, we consider one more baseline policy besides the above three policies, namely `top-$k$ children'. The intuition is to select the top $k$ nodes with the largest numbers of children as a (parent) node in a random tree can have an unlimited number of children, and they all can be saved upon the immunization of that parent node.

\vspace{-2pt}
\subsection{Simulation Results}
\vspace{-2pt}

We first provide the simulation results for the case of random binary trees. We evaluate the performance of our greedy algorithm and three baseline methods by examining the impacts of the expected immunization time $\Ex[\tau]$, the vaccination budget $k$, and the tree size $n$ on the total number of saved nodes in the end.

We present the fraction of saved nodes (to the tree size) by each vaccination policy with varying expected immunization time in Figure~\ref{fig:s1}. We here report the results when the tree sizes are 100, 500, and 1000. While the fraction of saved nodes increases regardless of vaccination strategies as the expected immunization time $\Ex[\tau]$ decreases, our greedy algorithm outperforms the baseline strategies for all test cases. As shown in Figure~\ref{fig:s1}, the (average) improvements of our greedy algorithm over top-$k$ descendants, top-$k$ NNs, and top-$k$ frontiers are up to 35.6\%, 81.4\%, and 983\%, respectively, which are achieved when $\Ex[\tau] \!=\! 10$ and $n \!=\! 100$. In particular, the top-$k$ frontiers achieves poor performance when $\Ex[\tau] \!=\! 10$, because the nodes whose depths are close to $\Ex[\tau]$ may not have enough descendants and thus contribute little to the final reward.  In addition, we report in Figure~\ref{fig:s2} the fraction of saved nodes when we change the vaccination budget from $k \!=\! 1$ to $k \!=\! 10$, while the expected immunization time remains the same as $\Ex[\tau] \!=\! 10$. The tree sizes are 100, 500, and 1000. Again, our greedy algorithm shows superior performance to the other baseline strategies. We further evaluate the performance of vaccination policies in the fraction of saved nodes when the tree size varies from $n \!=\! 10^2$ to $n \!=\! 10^4$. We report the results in Figure~\ref{fig:s5}, where the vertical line on the top of each bar indicates the 95\% confidence interval. Our greedy algorithm again exhibits superior performance to other policies, regardless of the tree size, with an improvement up to 325\%.

\begin{figure}[t]
	\centering
	\includegraphics[width=0.5\linewidth, trim=1mm 1mm 1mm 0mm, clip]{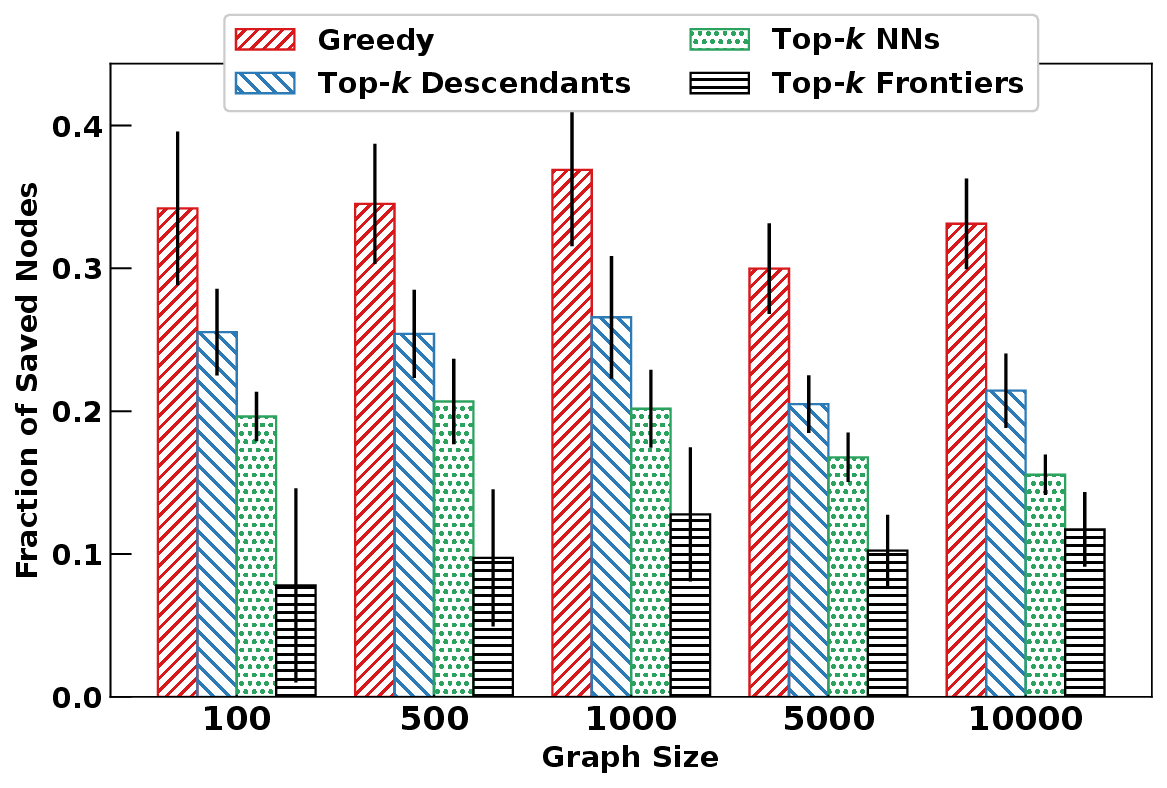}
	\vspace{-3mm}
	\caption{Fraction of saved nodes when the size of a random binary tree changes.}
	\label{fig:s5}
	\vspace{-4mm}
\end{figure}

\begin{figure}[t]
	\captionsetup[subfloat]{captionskip=0pt}	
	\centering
	\subfloat[Uniform]{%
		\includegraphics[width=0.45\linewidth, trim=0.3mm 0.3mm 0.3mm 0mm, clip]{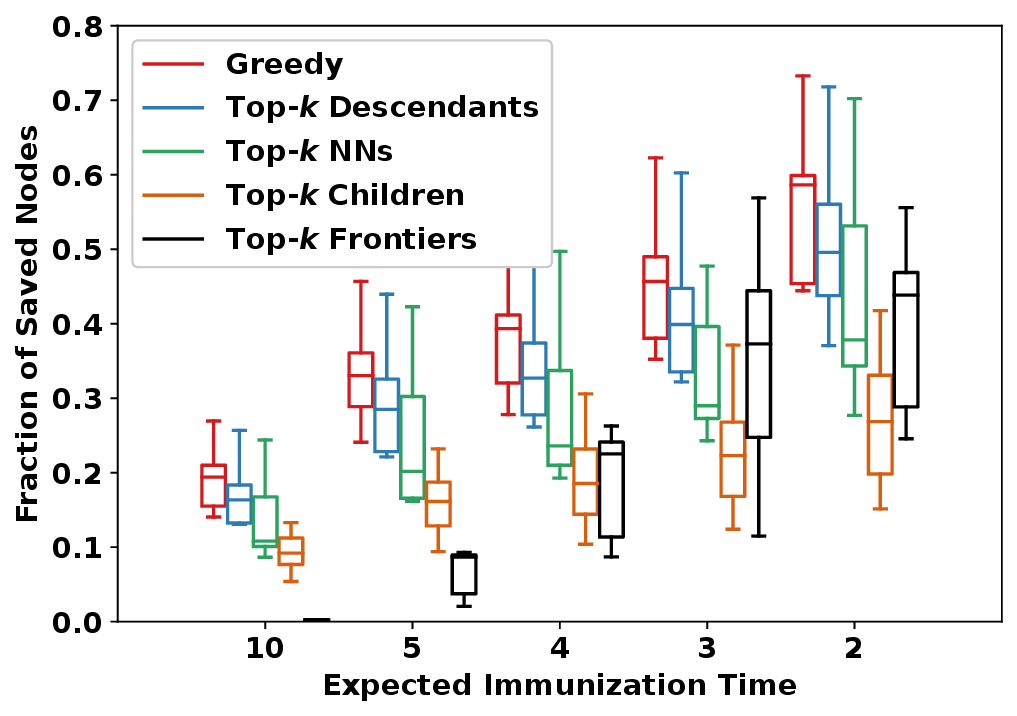}
	}\hspace{0mm}
	\subfloat[Poisson]{%
		\includegraphics[width=0.45\linewidth, trim=0.3mm 0.3mm 0.3mm 0mm, clip]{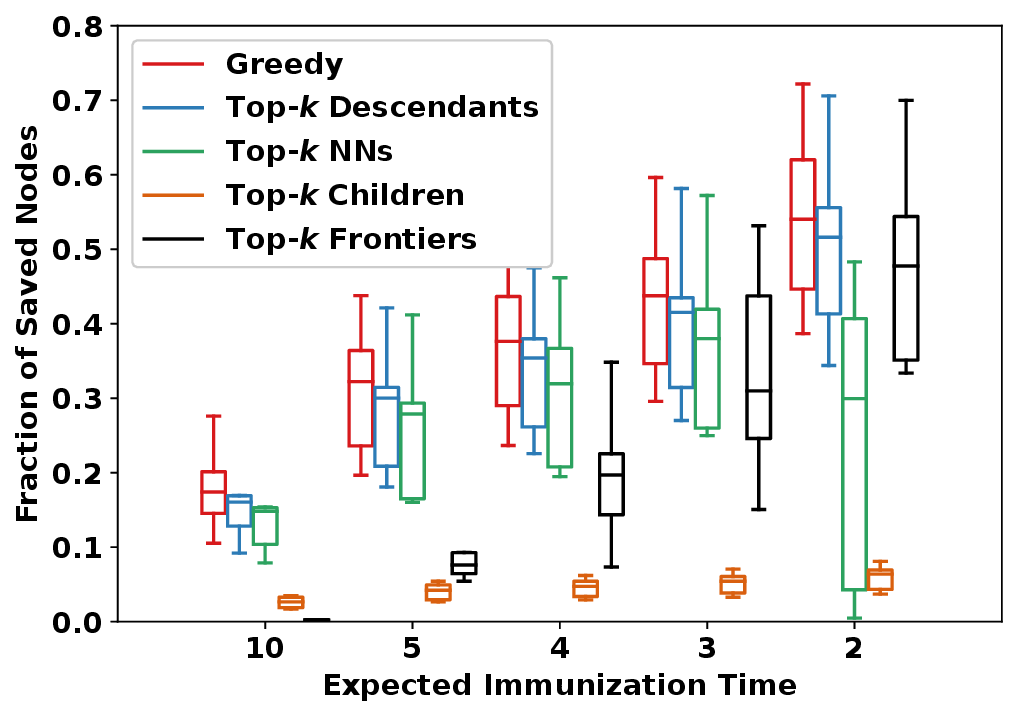}
	}\vspace{-1mm}
	\caption{Fraction of saved nodes with varying expected immunization time $\Ex[\tau]$ under random trees.}
	\label{fig:s3}
\end{figure}

We next turn our attention to the performance evaluation of our greedy algorithm and four baseline policies for the case of random trees. We examine the impacts of the expected immunization time $\Ex[\tau]$, the vaccination budget $k$, the average number of children $\Ex[\eta]$, and the tree size $n$ on the total number of finally saved nodes for each vaccination strategy. As mentioned before, we consider the uniform and Poisson distributions for the number of children $\eta$ per node.

\begin{figure}[t]
	\captionsetup[subfloat]{captionskip=0pt}	
	\centering
	\subfloat[Uniform]{%
		\includegraphics[width=0.45\linewidth, trim=0.3mm 0.3mm 0.3mm 0mm, clip]{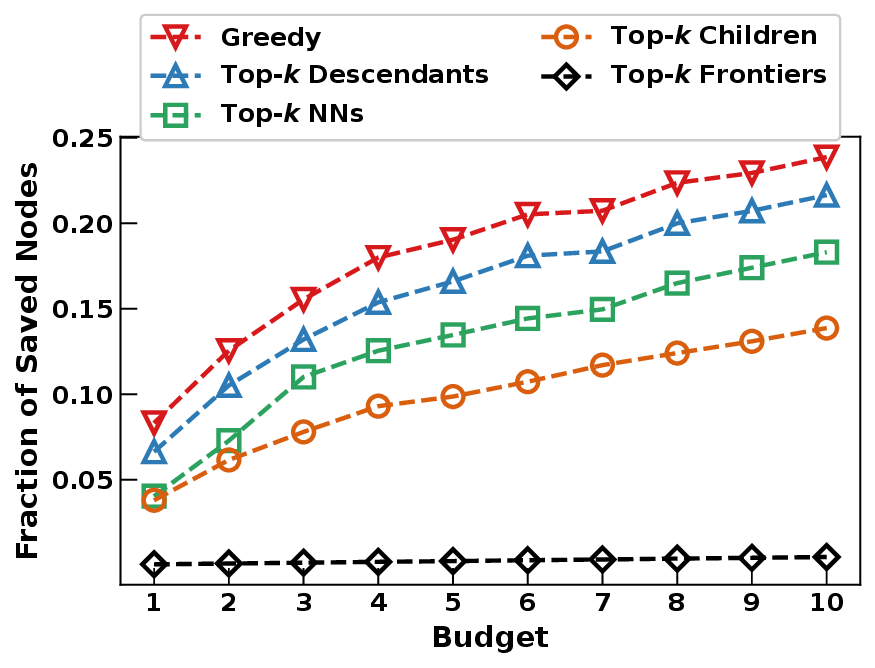}
	}\hspace{0mm}
	\subfloat[Poisson]{%
		\includegraphics[width=0.45\linewidth, trim=0.3mm 0.3mm 0.3mm 0mm, clip]{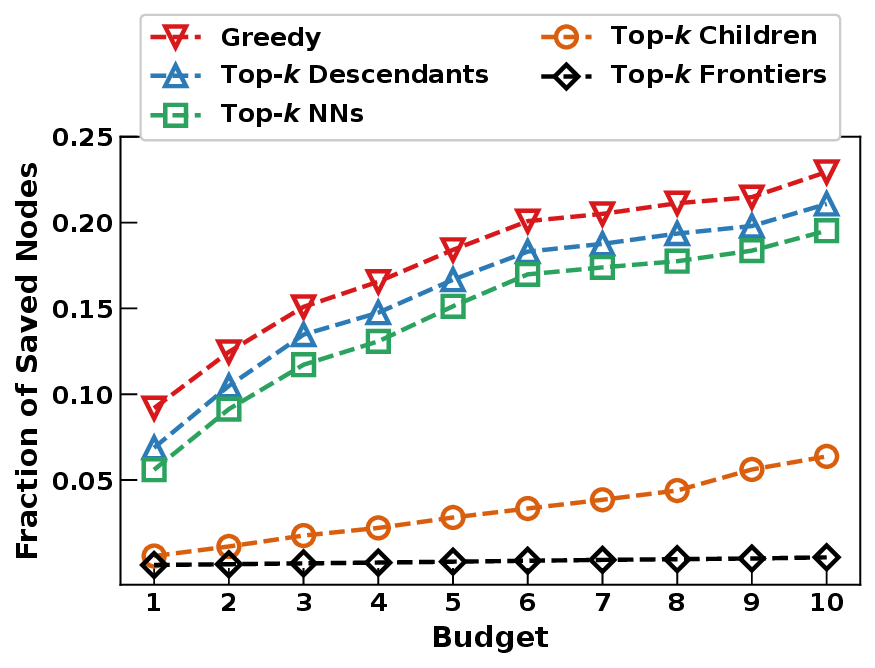}
	}\vspace{-1mm}
	\caption{Fraction of saved nodes with different values of the budget $k$ under random trees.}
	\label{fig:s4}
	\vspace{-3.5mm}
\end{figure}

\begin{figure}[t]
	\captionsetup[subfloat]{captionskip=0pt}	
	\centering
	\subfloat[Uniform]{%
		\includegraphics[width=0.45\linewidth, trim=0.3mm 0.3mm 0.3mm 0mm, clip]{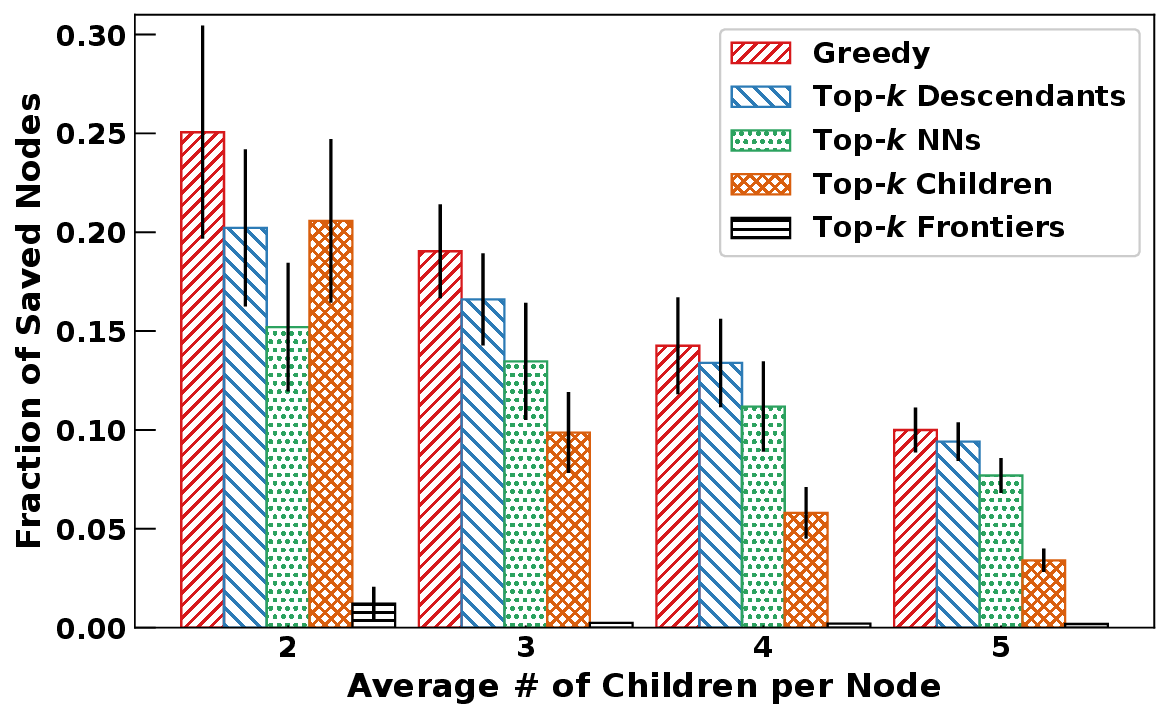}
	}\hspace{0mm}
	\subfloat[Poisson]{%
		\includegraphics[width=0.45\linewidth, trim=0.3mm 0.3mm 0.3mm 0mm, clip]{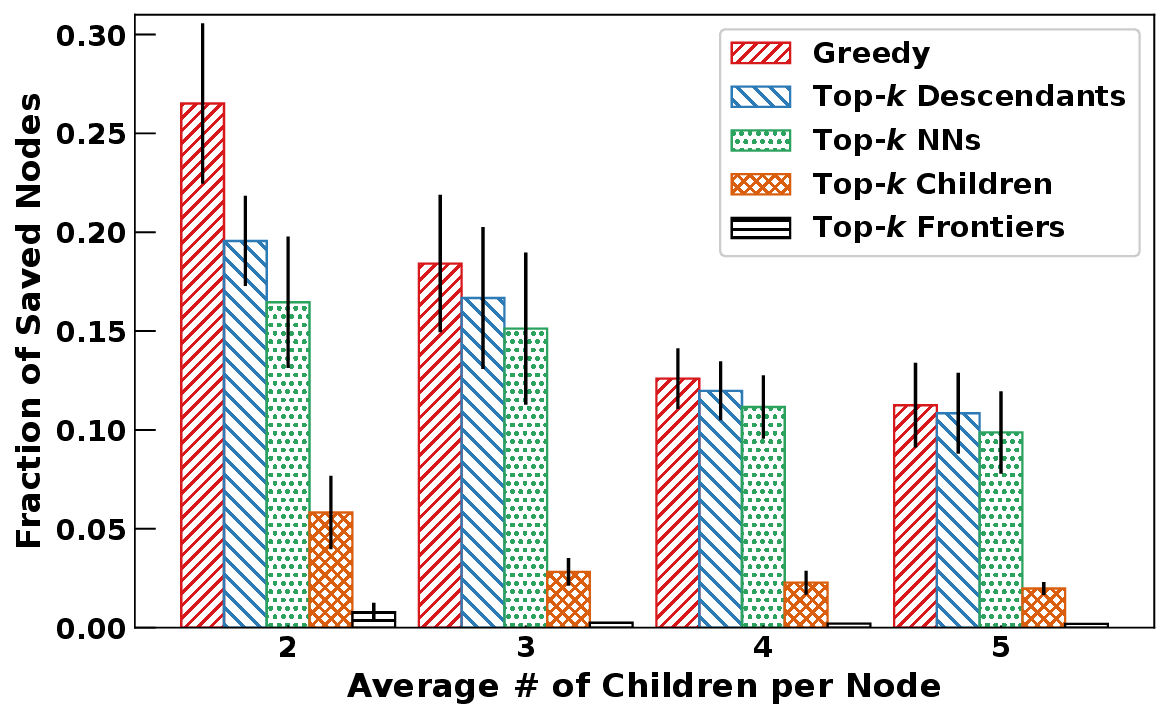}
	}\vspace{-1mm}
	\caption{Average fraction of saved nodes with different choices of $\Ex[\eta]$ under random trees.}
	\label{fig:s6}
\end{figure}

In Figure~\ref{fig:s3}, we present the simulation results to see the impact of the expected immunization time $\Ex[\tau]$ on the total number of saved nodes. The results manifest the superiority of our greedy algorithm over all four baseline policies. We also see that the performance of each vaccination strategy improves as the expected immunization time $\Ex[\tau]$ decreases. Figure~\ref{fig:s4} shows the impact of the budget $k$ on the performance of vaccination policies, while Figure~\ref{fig:s6} indicates how the average number of children $\Ex[\eta]$ affects the performance of each policy. The results again confirm the superiority of our algorithm over the baseline policies. It is also worth noting that the performance of each policy deteriorates as $\Ex[\eta]$ increases, and the performance gap between our greedy algorithm, top-$k$ descendants, and top-$k$ NNs decreases with increasing the value of $\Ex[\eta]$. We observe that when $\Ex[\eta]$ increases, the height of each resulting tree becomes smaller (the tree becomes `wider') as the tree size remains fixed. Therefore, with the same vaccination budget $k \!=\! 5$, the set of nodes for vaccination by each policy becomes overlapped with the ones from other policies.

We finally show in Figure~\ref{fig:s7} how the tree size affects the performance of vaccination policies. Our greedy algorithm again turns out to be substantially better than the baseline policies. We also observe that our algorithm, top-$k$ descendants, and top-$k$ NNs show more or less consistent performance over different tree sizes. The top-$k$ frontiers achieves the worst performance since the small height of random trees limits the number of saved nodes by the policy. While top-$k$ children achieves the second worst performance, its performance gets worse when the tree size increases under random trees with $\eta$ having a Poisson distribution. For large trees, the top-$k$ nodes with the largest numbers of children could be too far from the root node, implying that they could have only few descendants despite their many (direct) children. Thus, the benefit of vaccinating them could be quite limited as the number of saved nodes by them would be mostly limited to their children.

\begin{figure}[t]
	\captionsetup[subfloat]{captionskip=0pt}	
	\centering
	\subfloat[Uniform]{%
		\includegraphics[width=0.45\linewidth, trim=0.3mm 0.3mm 0.3mm 0mm, clip]{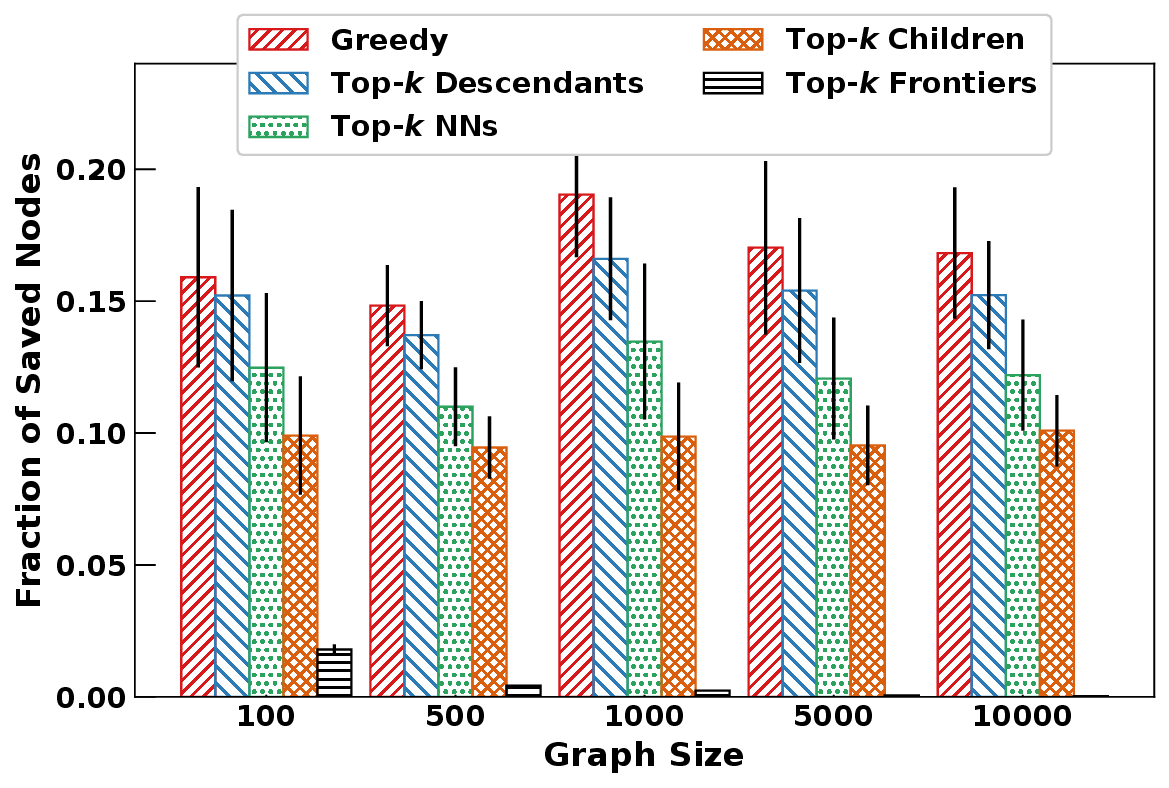}
	}\hspace{0mm}
	\subfloat[Poisson]{%
		\includegraphics[width=0.45\linewidth, trim=0.3mm 0.3mm 0.3mm 0mm, clip]{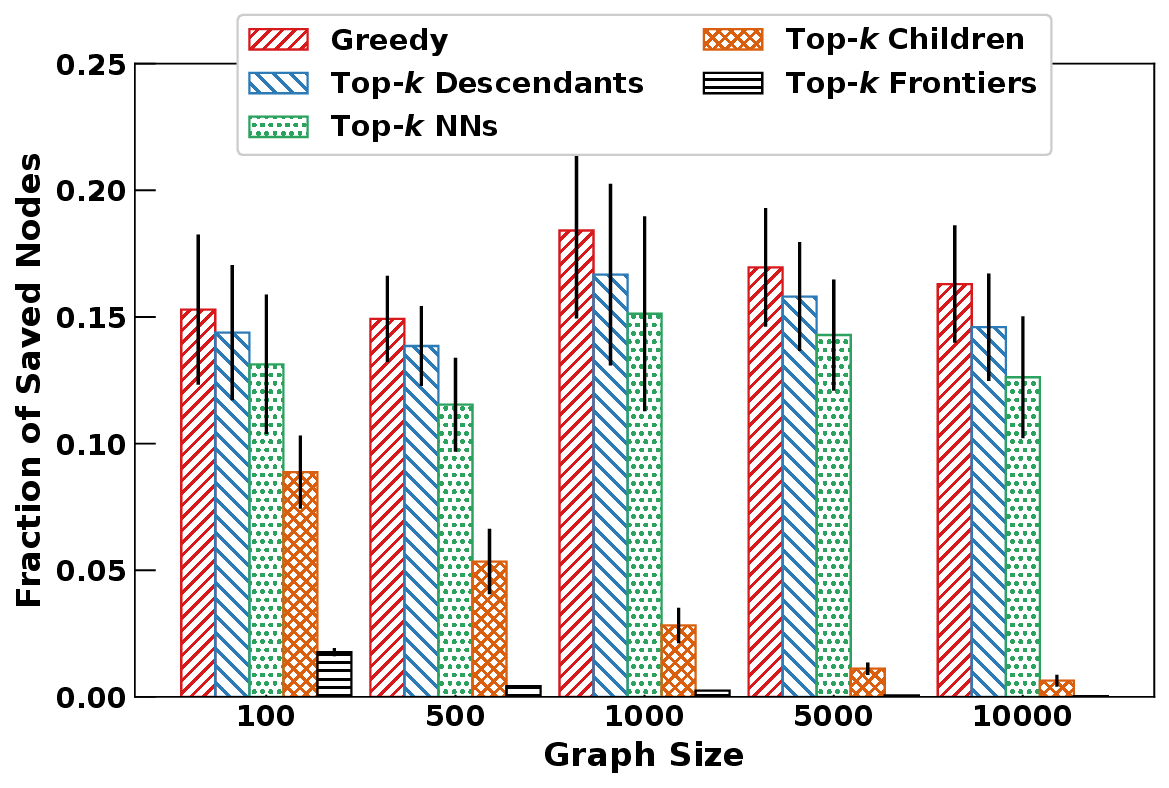}
	}\vspace{-1mm}
	\caption{Average fraction of saved nodes with varying sizes of random trees.}
	\label{fig:s7}
\end{figure}

In summary, the ``top-$k$ children" policy often exhibits bad performance as it is a \emph{source-agnostic} strategy. As explained before, it often ends up choosing a set of nodes for vaccination, which are either infected during the immunization time or have a limited benefit in saving (non-direct) descendants, despite the large numbers of their children. The ``top-$k$ NNs" policy is a \emph{source-aware} strategy, but it is most effective in controlling the propagation of infection at an \emph{early stage} as it chooses the nodes that are $k$ nearest neighbors of the root. Thus, it leads to poor performance when the immunization time/delay is longer. Despite the fact that ``top-$k$ frontiers" is a heuristic policy that seeks for the largest \emph{reward} from vaccination while taking the \emph{expected immunization time} into account, it does not consider the underlying structure of the tree network and fails to capture the essence of the epidemic propagation on the network, thereby leading to unsatisfactory performance. In particular, it exhibits poor performance on random graphs with relatively large average node degrees, and its performance varies significantly over different values of the expected immunization time.

In addition, the ``top-$k$ descendants" policy often achieves better performance than the other baseline policies as the set of nodes chosen for vaccination can save their all descendants as long as they are immune after the immunization time. However, it fails to capture the \emph{risk} of infection of each node during the immunization time. It is also not able to take into account the \emph{overlaps} between the sets of descendants of the target nodes for vaccination, which would be better avoided. In contrast, our greedy algorithm is able to capture both the reward after immunization and the risk of infection during the immunization time. The reward is also computed by taking into account the overlaps between the sets of descendants of the target nodes for vaccination. Hence, it achieves the superior performance over the other policies.

\section{Conclusion}

We have studied the problem of controlling the spread of a viral epidemic in an arbitrary tree network with a limited vaccination budget when vaccines take a non-negligible amount of time to come into effect. We were able to show that the problem is a monotone submodular maximization problem and developed a $(1\!-\!1/e)$-approximation greedy algorithm. We further presented its extension to the scenarios with multiple infection sources. Extensive simulation results under a wide range of scenarios demonstrated the superior performance of our greedy algorithm over other baseline strategies. We believe that our work provides a first step toward the design of vaccination strategies under realistic delay constraints.

\bibliographystyle{IEEEtran}
\bibliography{IEEEabrv,reference}

\end{document}